\newcommand{\beq}[1]{\begin{equation}\label{#1}}
\newcommand{\eeq}{\end{equation}}
\newcommand{\beqn}[1]{\begin{eqnarray}\label{#1}}
\newcommand{\eeqn}{\end{eqnarray}}
\newtheorem{thmbody}{Theorem}
\newenvironment{thm}{
%	\begin{singlespace}
\begin{thmbody}
	}{
	\end{thmbody} % \end{singlespace}
	}
\newtheorem{dfnbody}{Definition}
\newtheorem{corbody}{Corollary}
\newtheorem{lemmabody}{Lemma}
\newenvironment{lemma}{
	% \begin{singlespace}
\begin{lemmabody}
	}{
	\end{lemmabody} %  \end{singlespace}
	}
\newtheorem{propbody}{Proposition}
\newenvironment{proof}{
	{\it Proof:}
	}{
%	\  \rule{.1in}{.1in}
 $\Box$
	}
\begin{document}
\title{Proof of Convergence for\\ Correct-Decoding Exponent Computation}

\author{\IEEEauthorblockN{Sergey Tridenski}
\IEEEauthorblockA{Faculty of Engineering\\Bar-Ilan University, Israel\\
Email: tridens@biu.ac.il}
\and
\IEEEauthorblockN{Anelia Somekh-Baruch}
\IEEEauthorblockA{Faculty of Engineering\\Bar-Ilan University, Israel\\
Email: somekha@biu.ac.il}
\and
\IEEEauthorblockN{Ram Zamir}
\IEEEauthorblockA{EE - Systems Department\\Tel-Aviv University, Israel\\
Email: zamir@eng.tau.ac.il}}

%\author{
%Sergey~Tridenski, Anelia~Somekh-Baruch, and %~\IEEEmembership{Fellow},~\IEEEmembership{IEEE},
%Ram~Zamir}

%\thanks{
%The material in this paper was partially presented in ISIT2017 \cite{TridenskiZamir17} and ISIT2018 \cite{TridenskiZamir18}.}
%\thanks{This work of S. Tridenski and R. Zamir was partially supported by the Israel Science Foundation (ISF), grant \# 676/15,
%and by the US-Israel Binational and US-National Science Foundations (BSF-NSF), grant \# 2018690.}
%}

% make the title area
\maketitle
\begin{abstract}
%We prove convergence of iterative minimization at fixed rate and at fixed slope to the reliability function above capacity
%for a discrete memoryless channel with finite input and output alphabets.
For a discrete memoryless channel with finite input and output alphabets,
we prove convergence of a parametric family of iterative computations of the %reliability function above capacity
optimal correct-decoding exponent. The exponent, as a function of communication rate,
is computed for a fixed rate and for a fixed slope.
\end{abstract}

%\begin{IEEEkeywords}
%Correct-decoding exponent, Arimoto algorithm, Blahut algorithm.
%%,\newline maximum mutual information, erasure decoder.
%\end{IEEEkeywords}

%\markboth
%{To be Submitted to the IEEE Trans. on Information Theory}
%{Submitted to the IEEE Trans. on Information Theory, Revised April 2019}
%{Tridenski and Zamir: Channel input adaptation via natural type selection} % TODO: use the final title

% The very first letter is a 2 line initial drop letter followed
% by the rest of the first word in caps (small caps for compsoc).
%
% form to use if the first word consists of a single letter:
% \IEEEPARstart{A}{demo} file is ....
%
% form to use if you need the single drop letter followed by
% normal text (unknown if ever used by IEEE):
% \IEEEPARstart{A}{}demo file is ....
%
% Some journals put the first two words in caps:
% \IEEEPARstart{T}{his demo} file is ....
%
% Here we have the typical use of a "T" for an initial drop letter
% and "HIS" in caps to complete the first word.

%\IEEEPARstart{T}{he}

%\section{Introduction}

\bigskip

\section{Introduction}\label{Intro}

\bigskip

Consider a standard information theoretic setting of transmission through a discrete memoryless channel (DMC), with finite input and output alphabets, using block codes.
%Let $P(y \, | \,x)$ denote its transition probabilities.
For communication rates above capacity, the average probability of correct decoding in a block code tends to zero exponentially %with increase of block length.
fast as a function of the block length.
In the limit of a large block length, the lowest possible exponent %in
corresponding to
the probability of correct decoding, also called the reliability function above capacity, for all\footnote{The expression gives zero for the rates $R \leq \max_{\,Q}I(Q, P)$.} rates $R \geq 0$ is given by \cite{DueckKorner79}
\begin{equation} \label{eqIntroduction}
{E\mathstrut}_{\!c}(R) =
\min_{\substack{\\Q(x), \\ W(y\,|\,x)}}
\left\{
D( W \, \| \, P \, | \, Q) \, + \,
\big|R - I(Q, W)\big|^{+}
\right\},
%\label{eqDCcordecexpOriginal}
\end{equation}
where $P$ denotes the channel's transition probability $P(y\,|\,x)$, %in a DMC,
$D( W \, \| \, P \, | \, Q)$ is the Kullback-Leibler divergence between the conditional distributions $W$ and $P$, averaged over $Q$,
%and
$I(Q, W)$ is the %Shannon
mutual information of a pair of random variables with a joint distribution $Q(x)W(y\,|\,x)$,
%Also
and $|t|^{+} = \max\,\{0, t\}$.

%Sometimes
For %some
certain
applications,
it is important to be able to know the actual value of ${E\mathstrut}_{\!c}(R)$ when it is positive.
For example, in
applications of secrecy, it might be interesting to know the correct-decoding exponent of %at
an
eavesdropper.
Several algorithms have been proposed for computation of ${E\mathstrut}_{\!c}(R)$.

In the %computation
algorithm by Arimoto \cite{Arimoto76}
the computation of ${E\mathstrut}_{\!c}(R)$ is facilitated by an alternative expression for it \cite{Arimoto73}, \cite{DueckKorner79}, %\cite{OohamaJitsumatsu15},
\cite{OohamaJitsumatsu19}:
\begin{equation} \label{eqGallagerForm}
{E\mathstrut}_{\!c}(R) =
\sup_{\substack{\\0 \, \leq \, \rho \, < \, 1}}
\min_{\substack{\\Q}}
\big\{
{E\mathstrut}_{0}(-\rho, Q) \, + \, \rho R
\big\},
%\label{eqDCcordecexpOriginal}
\end{equation}
where ${E\mathstrut}_{0}(-\rho, Q)$ is the Gallager exponent function \cite[Eq.~5.6.14]{Gallager}.
In \cite{Arimoto76},
$\min_{\,Q}{E\mathstrut}_{0}(-\rho, Q)$ is computed for a fixed slope parameter $\rho$.
The computation is performed iteratively as alternating minimization,
based on the property that $\min_{\,Q}{E\mathstrut}_{0}(-\rho, Q)$ can be written as a double
minimum:
\begin{equation} \label{eqArimoto}
\min_{\substack{\\Q}} \; \min_{\substack{\\V}} \; \bigg\{-\log \sum_{x,\,y} Q^{1-\rho}(x)V^{\rho}(x\,|\,y)P(y\,|\,x)\bigg\},
\end{equation}
where the inner minimum is in fact equal to ${E\mathstrut}_{0}(-\rho, Q)$.
In \cite{OohamaJitsumatsu19}, \cite{OohamaJitsumatsu15} a different alternating-minimization algorithm is introduced,
based on the property, that $\min_{\,Q}{E\mathstrut}_{0}(-\rho, Q)$ can be written %alternatively
as another double
minimum over distributions:
\begin{equation} \label{eqOohama}
\min_{\substack{\\T, \,V}} \,\min_{\substack{\\T_{1},\,V_{1}}} \bigg\{-\sum_{x,\,y}T(y)V(x\,|\,y)\log\frac{V^{\rho}_{1}(x\,|\,y)P(y\,|\,x)}{U^{\rho-1}_{1}(x)T(y)V(x\,|\,y)}\bigg\},
\end{equation}
where ${U\mathstrut}_{\!1}(x) = \sum_{y}{T\mathstrut}_{\!1}(y){V\mathstrut}_{\!1}(x\,|\,y)$.
As with (\ref{eqArimoto}), the computation of ${E\mathstrut}_{\!c}(R)$ with (\ref{eqOohama}) is also performed for a fixed $\rho$.

Sometimes, however, it is suitable or desirable to compute ${E\mathstrut}_{\!c}(R)$ directly for a given rate $R$.
For example, when ${E\mathstrut}_{\!c}(R) = 0$, and we would like to find such a distribution $Q$, for which the minimum (\ref{eqIntroduction})
is zero, as a by-product of the computation. Such distribution $Q$ has a practical meaning of a channel input distribution achieving reliable communication.
In \cite{TridenskiZamir18Trans}, an iterative minimization procedure for computation of ${E\mathstrut}_{\!c}(R)$ at fixed $R$ is proposed,
using the property that ${E\mathstrut}_{\!c}(R)$ can be written as a double minimum \cite{TridenskiZamir17}:
\begin{equation} \label{eqProperty}
\min_{\substack{\\Q(x)}}
\;
\min_{\substack{\\T(y), \\ V(x\,|\,y)
}}
\;
\left\{
D( TV \, \| \, QP) \, + \,
\big|R - D(V\,\|\,Q\,|\,T)\big|^{+}
\right\},
\end{equation}
where the inner $\min$ equals $\,\sup_{\,0 \, \leq \, \rho \, < \, 1}
\big\{
{E\mathstrut}_{0}(-\rho, Q) \, + \, \rho R
\big\}$.
In \cite{TridenskiZamir18Trans},
the inner minimum of (\ref{eqProperty}) is computed stochastically
by virtue of a correct-decoding {\em event} itself,
yielding the minimizing solution ${T\mathstrut}^{*}{V\mathstrut}^{*}$.
%the minimizing solution ${T\mathstrut}^{*}{V\mathstrut}^{*}$ of
%the inner minimum of (\ref{eqProperty}) is approximated stochastically
%by virtue of a correct-decoding {\em event} itself. %, yielding the minimizing solution ${T\mathstrut}^{*}{V\mathstrut}^{*}$.
The computation is then repeated iteratively,
by assigning $Q(x) = \sum_{y}{T\mathstrut}^{*}(y){V\mathstrut}^{*}(x\,|\,y)$.
It is shown in %\cite{TridenskiZamir18Trans}
\cite[Theorem~1]{TridenskiZamir18Trans}, that
the iterative procedure using the inner minimum of (\ref{eqProperty}) leads to convergence of this minimum
%this iterative procedure leads to convergence of the inner minimum of (\ref{eqProperty})
to the double minimum (\ref{eqProperty}), which is evaluated at least over some {\em subset} of the support of the initial distribution ${Q\mathstrut}_{0}$.
In addition, a sufficient condition on ${Q\mathstrut}_{0}$ is provided, which guarantees convergence of the inner minimum in (\ref{eqProperty})
to zero.
This condition on ${Q\mathstrut}_{0}$ in %\cite{TridenskiZamir18Trans}
\cite[Lemma~6]{TridenskiZamir18Trans}
is rather %weak and
limiting, and is hard to verify.
%evaluate.

In the current work, we %do away with the sufficient condition from \cite{TridenskiZamir18Trans}
improve the result of \cite{TridenskiZamir18Trans}.
We modify the method of Csisz\'ar and Tusn\'ady \cite{CsiszarTusnady84} to
%and
prove that %, in fact,
the iterative minimization procedure of \cite{TridenskiZamir18Trans} converges to the global minimum (\ref{eqProperty})
over the {\em support} of the initial distribution ${Q\mathstrut}_{0}$ {\em itself}, for any $R$ (i.e., not only
if the global minimum is zero),
%to ${E\mathstrut}_{\!c}(R) = 0$
%zero),
and without any additional condition.
In particular, use of a strictly positive %initial distribution
${Q\mathstrut}_{0}$ guarantees convergence to ${E\mathstrut}_{\!c}(R)$.

By a similar method, we also show convergence of the fixed-slope counterpart of the minimization (\ref{eqProperty}), which is an alternating minimization at fixed $\rho$,
based on the double minimum \cite{TridenskiZamir18Extended}
\begin{equation} \label{eqForCompare}
\min_{\substack{\\Q}} \;\min_{\substack{\\T,\,V}}\; \bigg\{-\sum_{x,\,y}T(y)V(x\,|\,y)\log\frac{Q^{1-\rho}(x)P(y\,|\,x)}{T(y)V^{1-\rho}(x\,|\,y)}\bigg\},
\end{equation}
where the inner minimum is in fact equal to ${E\mathstrut}_{0}(-\rho, Q)$.
%Unlike the minimization with (\ref{eqArimoto}), as $\rho \rightarrow 0$, the minimization with (\ref{eqForCompare})
%does not turn into an algorithm for computation of the channel capacity.

Furthermore, in the current paper we extend the analysis, presented in the shorter version of the paper \cite{TridenskiSomekhBaruchZamir20}.
Here we slightly generalize the expression (\ref{eqProperty}).
Using this generalization, we prove convergence of a parametric family of iterative computations,
of which the computation according to (\ref{eqProperty}) from \cite{TridenskiZamir18Trans},
as well as the computations according to (\ref{eqForCompare}), \cite{TridenskiZamir18Extended}, and according to (\ref{eqOohama}), \cite{OohamaJitsumatsu19},
%from \cite{TridenskiZamir18Extended},
%and according to (\ref{eqOohama}) from \cite{OohamaJitsumatsu19}, \cite{OohamaJitsumatsu15},
become special cases.

As in the shorter version of the paper \cite{TridenskiSomekhBaruchZamir20}, besides the variable $R$, we take into account also a possible channel-input constraint, denoted by $\alpha$.
In Section~\ref{CorDecExp} we examine the expression for the correct-decoding exponent.
In Section~\ref{FixedRate} we prove convergence of the iterative minimization for fixed $(R, \alpha)$.
In Section~\ref{FixedSlope} we prove convergence of the iterative minimization for fixed gradient w.r.t. $(R, \alpha)$.
In Sections~\ref{FixedMixed} and ~\ref{FixedMixed2} we prove convergence of mixed scenarios:
for fixed $\alpha$ and slope $\rho$ in the direction of $R$, and vice versa.

%\begin{equation} \label{eqBlahut}
%\min_{\substack{\\Q}} \; \min_{\substack{\\V}}\; \bigg\{-\sum_{x,\,y}P(y)V(x\,|\,y)\log\frac{Q(x)e^{-\rho d(y, \,x)}}{V(x\,|\,y)}\bigg\}
%\end{equation}

%\bigskip

\section{Correct-decoding exponent}\label{CorDecExp}
%\subsection{Correct-decoding exponent}\label{CorDecExp}
%\bigskip
%\subsection{Implicit expressions}\label{CorDecExpA}

%\bigskip

Let $P(y\,|\,x)$ denote transition probabilities in a DMC from $x \in {\cal X}$ to $y \in {\cal Y}$,
where ${\cal X}$ and ${\cal Y}$ are finite channel input and output alphabets, respectively.
Suppose also that the channel input $x$ with an additive cost function $f: {\cal X} \rightarrow \mathbb{R}$ satisfies on average an input constraint $\alpha \in \mathbb{R}$, chosen large enough, such that  $\alpha \geq \min_{\,x} f(x)$.
The maximum-likelihood %(ML)
correct-decoding exponent (\cite{DueckKorner79}, \cite{Oohama17}) of this channel, as a function of the rate $R\geq 0$ and the input constraint $\alpha$, is given by
\begin{align}
& {E\mathstrut}_{\!c}(R, \alpha) \; =
\label{eqDCcordecexp} \\
&\min_{\substack{\\Q(x):\\\mathbb{E}_{Q}[f(X)]\;\leq \; \alpha}}
\;
\min_{\substack{\\W(y\,|\,x)
}}
\;
\left\{
D( W \, \| \, P \, | \, Q) \, + \,
\big|R - I(Q, W)\big|^{+}
\right\},
\nonumber
\end{align}
where %$I(Q, W)$ is the Shannon mutual information of a pair of random variables with the joint distribution $Q(x)W(y \, | \, x)$,
%$D( QW \, \| \, QP)$ denotes the Kullback-Leibler
%divergence between the joint distributions $Q(x)W(y \, | \, x)$ and $Q(x)P(y \, | \, x)$,
%denoted as $QW$ and $QP$, respectively,
%and
$\mathbb{E}_{Q}[f(X)]$ denotes the expectation of $f(x)$ w.r.t. the distribution $Q(x)$ over ${\cal X}$.
%Also $|t|^{+} = \max\,\{0, t\}$. %$|\,a - b\,|^{+} = \max\,\{a - b, \,0\}$.

Let $Q(x)W(y\,|\,x) \equiv T(y)V(x\,|\,y)$, or $QW$, denote a distribution over ${\cal X}\times {\cal Y}$,
and let $\widetilde{Q}\,\widetilde{\!W}$ be another such distribution.
We can think of $4$ different divergences {\em from} $\widetilde{Q}\,\widetilde{\!W}$ {\em to} $QW$:
$D(Q\,\|\, \widetilde{Q})$, $D(W \, \| \, \,\widetilde{\!W} \, | \, Q)$, $D(T\,\|\,\widetilde{T})$,
and $D(V \,\| \, \widetilde{V} \, | \, T)$. Using $4$ non-negative parameters $t_{i}\geq 0$, $i = 1, 2, 3, 4$,
we define a non-negative linear combination of these divergences:
\begin{align}
{D\mathstrut}^{\bf t}(QW, \,\widetilde{Q}\,\widetilde{\!W}) \; \triangleq \;\; & t_{1}D(Q\,\|\, \widetilde{Q}) +
t_{2}D(W \, \| \, \,\widetilde{\!W} \, | \, Q) \, +
\nonumber \\
&
t_{3}D(T\,\|\, \widetilde{T}) +
t_{4}D(V \, \| \, \widetilde{V} \, | \, T),
\label{eqCombination}
\end{align}
where ${\bf t} \triangleq (t_{1}, t_{2}, t_{3}, t_{4})$ is an index.
With the help of ${D\mathstrut}^{\bf t}(QW, \widetilde{Q}\,\widetilde{\!W})$,
the expression (\ref{eqDCcordecexp}) can be rewritten as follows:
\begin{align}
&
\;\;\;\;\;\;\;\;\;\;\;\,\,\,\,
\min_{\substack{\\Q, \, W:\\\mathbb{E}_{Q}[f(X)]\;\leq \; \alpha}}
\;
\left\{
D( W \, \| \, P \, | \, Q) \, + \,
\big|R - I(Q, W)\big|^{+}
\right\}
\nonumber \\
& = \;
\;\;\;\;\;\;\;\,\,
\min_{\substack{\\Q, \, W:\\
\mathbb{E}_{Q}[f(X)]\;\leq \; \alpha
}}
\!
\max
\Big\{D( W \, \| \, P \, | \, Q),
\nonumber \\
&
\;\;\;\;\;\;\;\;\;\;\;\;\;\;\;\;\;\;\;\;\;\;\;\;\;\;\;\;
\;\;\;\;\;\;\;\;\;\;\;\,\,\,\,
D( W \, \| \, P \, | \, Q) + R - I(Q, W)
\Big\}
\nonumber
\end{align}
\begin{align}
& = \;
\min_{\substack{\\\widetilde{Q}, \, \,\widetilde{\!W}}}
\;
\min_{\substack{\\Q, \, W:\\
\mathbb{E}_{Q}[f(X)]\;\leq \; \alpha
}}
\!
\max
\Big\{D( W \, \| \, P \, | \, Q) %\,
 + {D\mathstrut}^{\bf t}(QW, \,\widetilde{Q}\,\widetilde{\!W}),
\nonumber \\
&
\;\;\;\;\;\;\;\;\;\;\;\;\;\,\,\,\,
%{D\mathstrut}_{\bf t}(UW, \,\widetilde{U}\,\widetilde{\!W}), \;\;
\;\;\;\;\;\;\;\;\;\;\;\;\;\;\;\;\;\;\;\;\;\;\;\;\;\;
D( W \, \| \, P \, | \, Q) + R - I(Q, W)
\Big\},
\label{eqEquality}
\end{align}
where the first equality holds because $|a|^{+} = \max\,\{0, a\}$,
and the second equality follows since
$\min_{\widetilde{Q}\,\widetilde{\!W}} {D\mathstrut}^{\bf t}(QW, \widetilde{Q}\,\widetilde{\!W}) = 0$
and the minima can be interchanged.
In \cite{TridenskiZamir18Trans} a special case (${\bf t} = (1, 0, 0, 0)$) of the inner minimum of (\ref{eqEquality}) was used
as a basis of
an iterative
procedure to
find minimizing solutions of (\ref{eqDCcordecexp}).
In what follows, we modify the method of Csisz\'ar and Tusn\'ady \cite{CsiszarTusnady84}
to show convergence of that minimization procedure.
The method allows us to prove convergence in a slightly more general setting (\ref{eqEquality}), (\ref{eqCombination}), with arbitrary non-negative parameters $(t_{1}, t_{2}, t_{3}, t_{4})$.

%\bigskip

\section{Convergence of the iterative minimization for fixed $(R, \alpha)$}\label{FixedRate}

%\bigskip

Let us define a short notation for the maximum in (\ref{eqEquality}):
%, which is also the objective function of (\ref{eqBasis}):
\begin{align}
{F\mathstrut}_{1}^{\bf t}(QW, \,\widetilde{Q}\,\widetilde{\!W}) \, & \triangleq \, D( W \, \| \, P \, | \, Q) %\,
 + {D\mathstrut}^{\bf t}\!(QW, \,\widetilde{Q}\,\widetilde{\!W}),
\label{eqF1} \\
%\end{align}
%\begin{align}
{F\mathstrut}_{2}(QW, R) \, & \triangleq \, D( W \, \| \, P \, | \, Q) - I(Q, W) + R,
\label{eqF2} \\
%\end{align}
%\begin{align}
{F\mathstrut}^{\bf t}(QW, \,\widetilde{Q}\,\widetilde{\!W}, \, R) \, & \triangleq \, \max \Big\{{F\mathstrut}_{1}^{\bf t}(QW, \,\widetilde{Q}\,\widetilde{\!W}),  \,{F\mathstrut}_{2}(QW, R)\Big\}.
\label{eqF}
\end{align}
Define %also
notation for the inner minimum in (\ref{eqEquality}):
\begin{equation} \label{eqDef}
{E\mathstrut}_{\!c}^{\bf t}(\widetilde{Q}\,\widetilde{\!W}, R, \alpha) \; \triangleq \;
\min_{\substack{\\Q, \, W:\\
\mathbb{E}_{Q}[f(X)]\;\leq \; \alpha
}}
{F\mathstrut}^{\bf t}(QW, \,\widetilde{Q}\,\widetilde{\!W}, \, R)
\end{equation}
The iterative minimization procedure from \cite{TridenskiZamir18Trans}, consisting of two steps in each iteration\footnote{Note that (\ref{eqMinProcedure}) %this
is not just an alternating minimization procedure w.r.t. ${F\mathstrut}^{\bf t}(QW, \widetilde{Q}\,\widetilde{\!W}, R)$, or not the only one possible,
in a sense that other choices of ${\widetilde{Q}\mathstrut}_{\ell\,+\,1}{\,\widetilde{\!W}\mathstrut}_{\!\ell\,+\,1}$ may also minimize ${F\mathstrut}^{\bf t}({Q\mathstrut}_{\ell}{W\mathstrut}_{\!\ell}, \,\cdot\;, \, R)$.
%For example, in the absence of the channel input constraint, for any $Q$ it already holds that $F({U\mathstrut}_{\!\ell}\,{W\mathstrut}_{\!\ell}, \,Q, \, %R) \geq F({U\mathstrut}_{\!\ell}\,{W\mathstrut}_{\!\ell}, \,{Q\mathstrut}_{\ell}, \, R)$,
%and, in particular, any $Q$,
%such that $D({U\mathstrut}_{\!\ell} \,\|\, Q) \leq D({U\mathstrut}_{\!\ell} \,\|\, {Q\mathstrut}_{\ell})$,
%will minimize $F({U\mathstrut}_{\!\ell}\,{W\mathstrut}_{\!\ell}, \,Q, \, R)$.
},
%$F({U\mathstrut}_{\!\ell}\,{W\mathstrut}_{\!\ell}, \,{Q\mathstrut}_{\ell}) = F({U\mathstrut}_{\!\ell}\,{W\mathstrut}_{\!\ell}, %\,{Q\mathstrut}_{\ell\,+\,1})$.},
in a more general form is given by
\begin{align}
&
\begin{array}{l}
\displaystyle
\;\;\;\;\;\;\;\;\;
{Q\mathstrut}_{\ell}{W\mathstrut}_{\!\ell} \;\; \in \;
\underset{\substack{\\Q, \, W:\\
\mathbb{E}_{Q}[f(X)]\;\leq \; \alpha
}}{\arg\min}
%\arg \min_{\substack{\\U(x), \, W(y\,|\,x):\\
%\mathbb{E}_{U}[f(X)]\;\leq \; \alpha
%}}
{F\mathstrut}^{\bf t}(QW, \, {\widetilde{Q}\mathstrut}_{\ell}{\,\widetilde{\!W}\mathstrut}_{\!\ell}, \, R), \\ %\;\;\; \ell \, = \, 0, 1, 2, ...
%\label{eqStep1} \\
\,{\widetilde{Q}\mathstrut}_{\ell\, + \, 1}{\,\widetilde{\!W}\mathstrut}_{\!\ell\, + \, 1} \; = \;
\;\;\;\;
{Q\mathstrut}_{\ell}{W\mathstrut}_{\!\ell},
%\;\;\;\;\;\;\;\;\;\;\;\;\;\;\;\;\;\;\;\;\;\;\;\;\;\;\;\;\;
%\ell \, = \, 0, 1, 2, ... \,.
\end{array}
\label{eqMinProcedure} \\
&
\;\;\;\;\;\;\;\;\;\;\;\;\;\;\;\;\;\;\;
\;\;\;\;\;\;\;\;\;\;\;\;\;\;\;\;\;\;\;
\;\;\;\;\;\;\;\;\;\;\;\;\;\;\;\;\;\;\;
\ell \, = \, 0, 1, 2, ... \,.
\nonumber
\end{align}
We assume that ${\widetilde{Q}\mathstrut}_{0}{\,\widetilde{\!W}\mathstrut}_{\!0}$ in (\ref{eqMinProcedure}) is chosen such that
the set $\big\{QW:  \sum_{x}Q(x)f(x) \leq \alpha, \; {F\mathstrut}_{1}^{\bf t}(QW, \,{\widetilde{Q}\mathstrut}_{0}{\,\widetilde{\!W}\mathstrut}_{\!0}) < +\infty \big\}$
is non-empty, which guarantees
${F\mathstrut}^{\bf t}({Q\mathstrut}_{0}{W\mathstrut}_{\!0}, \,{\widetilde{Q}\mathstrut}_{0}{\,\widetilde{\!W}\mathstrut}_{\!0}, \, R) = {E\mathstrut}_{\!c}^{\bf t}({\widetilde{Q}\mathstrut}_{0}{\,\widetilde{\!W}\mathstrut}_{\!0}, R, \alpha) < +\infty$.
By (\ref{eqF1}) it is clear that (\ref{eqMinProcedure}) produces a monotonically non-increasing sequence ${E\mathstrut}_{\!c}^{\bf t}({\widetilde{Q}\mathstrut}_{\ell}{\,\widetilde{\!W}\mathstrut}_{\!\ell}, R, \alpha)$, $\ell = 0, 1, 2, ... \,$.
Our main result is given by the following theorem, which is an improvement on \cite[Theorem~1]{TridenskiZamir18Trans} and \cite[Lemma~6]{TridenskiZamir18Trans}:

\bigskip

\begin{thm}%[]
\label{thmConvergence}%\newline
{\em Let ${\big\{{Q\mathstrut}_{\ell}{W\mathstrut}_{\!\ell}\big\}\mathstrut}_{\ell \, = \, 0}^{+\infty}$ be a sequence of iterative solutions produced by (\ref{eqMinProcedure}). %\newline
%If $\,{E\mathstrut}_{\!c}({Q\mathstrut}_{0}, R, \alpha) < +\infty$, %\newline
Then}
\begin{equation} \label{eqConvergence}
%F({U\mathstrut}_{\!\ell}\,{W\mathstrut}_{\!\ell}, \,{Q\mathstrut}_{\ell}, \, R)
{E\mathstrut}_{\!c}^{\bf t}({\widetilde{Q}\mathstrut}_{\ell}{\,\widetilde{\!W}\mathstrut}_{\!\ell}, R, \alpha)
\, \overset{\ell \, \rightarrow\,\infty}{\searrow} \,
\min_{\substack{\\\widetilde{Q}, \, \,\widetilde{\!W}:\\ {D\mathstrut}^{\bf t}(\widetilde{Q}\,\widetilde{\!W}\!, \,{\widetilde{Q}\mathstrut}_{0}{\,\widetilde{\!W}\mathstrut}_{\!0})
\,<\, \infty}}
{E\mathstrut}_{\!c}^{\bf t}(\widetilde{Q}\,\widetilde{\!W}, R, \alpha),
%\;\;
%\min_{\substack{\\U(x), \, W(y\,|\,x):\\
%\mathbb{E}\,[f(X)]\;\leq \; \alpha
%}}
%\;
%F(UW, Q).
\end{equation}
%\begin{equation} \label{eqBoundEll}
%F({U\mathstrut}_{\!\ell}\,{W\mathstrut}_{\!\ell}, \,{Q\mathstrut}_{\ell})
%\;\; \leq \;\; F(\hat{U}\hat{W}, \hat{U}) \, + \, \big|D( \hat{U}\hat{W} \, \| \, {Q\mathstrut}_{\ell} P) - D( \hat{U}\hat{W} \, \| \, %{Q\mathstrut}_{\ell\,+\,1} P)\big|^{+}, \;\;\;\;\;\; \ell \, = \, 0, 1, 2, ... \,.
%\end{equation}
{\em where ${E\mathstrut}_{\!c}^{\bf t}(\widetilde{Q}\,\widetilde{\!W}, R, \alpha)$ is defined in (\ref{eqDef}) and ${D\mathstrut}^{\bf t}(\cdot \,, \cdot)$  in (\ref{eqCombination}).}
\end{thm}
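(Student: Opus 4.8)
The plan is to adapt the alternating-minimization machinery of Csisz\'ar and Tusn\'ady to the function $F^{\bf t}(\cdot,\cdot,R)$ of (\ref{eqF}). Write $\pi_\ell \triangleq Q_\ell W_\ell$ and $\widetilde\pi_\ell \triangleq \widetilde Q_\ell \widetilde W_\ell$, so that the second line of (\ref{eqMinProcedure}) reads $\widetilde\pi_{\ell+1}=\pi_\ell$; thus (\ref{eqMinProcedure}) is a genuine two-block alternating minimization of $F^{\bf t}$, the second block (minimizing over $\widetilde Q\widetilde W$, which by (\ref{eqF1}) amounts to minimizing $D^{\bf t}$ to zero) being realized by the copy step. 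Monotonicity of $E_c^{\bf t}(\widetilde\pi_\ell,R,\alpha)$ is already noted in the text. First I would dispatch the easy half of (\ref{eqConvergence}): finiteness of $F_1^{\bf t}(\pi_\ell,\widetilde\pi_\ell)$ forces $D^{\bf t}(\pi_\ell,\widetilde\pi_\ell)<\infty$, so the support of $\pi_\ell$ is contained in that of $\widetilde\pi_\ell=\pi_{\ell-1}$, hence inductively in that of $\widetilde\pi_0$. Therefore every $\widetilde\pi_\ell$ lies in the feasible set $\{\widetilde Q\widetilde W : D^{\bf t}(\widetilde Q\widetilde W,\widetilde\pi_0)<\infty\}$ of the right-hand minimum, giving $E_c^{\bf t}(\widetilde\pi_\ell,R,\alpha)\ge \Phi^\star$, where $\Phi^\star$ denotes the right-hand side of (\ref{eqConvergence}). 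It then remains only to show that the monotone limit does not exceed $\Phi^\star$.

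For that I would set up a telescoping argument. Using compactness on the finite alphabets, fix a minimizer of the right-hand side; since copying is optimal in the second block, it may be taken of the form $(\pi^\star,\pi^\star)$ with $D^{\bf t}(\pi^\star,\widetilde\pi_0)<\infty$ and $F^{\bf t}(\pi^\star,\pi^\star,R)=\Phi^\star$. The engine is the Csisz\'ar--Tusn\'ady four-point (five-point) inequality
\begin{equation}
F^{\bf t}(\pi^\star,\pi_\ell,R) + F^{\bf t}(\pi_\ell,\widetilde\pi_\ell,R) \;\le\; F^{\bf t}(\pi^\star,\pi^\star,R) + F^{\bf t}(\pi^\star,\widetilde\pi_\ell,R),
\label{eqFourPoint}
\end{equation}
which I would establish for every feasible $\pi^\star$ and every $\ell$. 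Granting (\ref{eqFourPoint}) and using $\widetilde\pi_{\ell+1}=\pi_\ell$ together with $F^{\bf t}(\pi_\ell,\widetilde\pi_\ell,R)=E_c^{\bf t}(\widetilde\pi_\ell,R,\alpha)$, rearrangement gives
\begin{equation}
E_c^{\bf t}(\widetilde\pi_\ell,R,\alpha) - \Phi^\star \;\le\; F^{\bf t}(\pi^\star,\widetilde\pi_\ell,R) - F^{\bf t}(\pi^\star,\widetilde\pi_{\ell+1},R).
\label{eqTele}
\end{equation}
The right-hand side of (\ref{eqTele}) telescopes: the potential $F^{\bf t}(\pi^\star,\widetilde\pi_\ell,R)$ is finite at $\ell=0$ (because $D^{\bf t}(\pi^\star,\widetilde\pi_0)<\infty$), is non-increasing (the left-hand side being non-negative), and is bounded below; summing over $\ell$ shows the non-negative terms on the left are summable, forcing $E_c^{\bf t}(\widetilde\pi_\ell,R,\alpha)\to\Phi^\star$, which is the claimed limit.

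The crux, and the step I expect to be the main obstacle, is (\ref{eqFourPoint}). The natural tools are the three-point identity for Kullback--Leibler divergence, which yields $D^{\bf t}(\pi^\star,\widetilde\pi_\ell)=D^{\bf t}(\pi^\star,\pi_\ell)+D^{\bf t}(\pi_\ell,\widetilde\pi_\ell)+(\text{a term linear in }\pi^\star)$; the representation $F^{\bf t}=\max\{F_1^{\bf t},F_2\}=\sup_{0\le\rho\le 1}\{(1-\rho)F_1^{\bf t}+\rho F_2\}$, which linearizes the maximum into a fixed-slope functional; a Lagrange multiplier for the constraint $\mathbb E_Q[f(X)]\le\alpha$; and the first-order optimality of the inner minimizer $\pi_\ell$. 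Feeding the three-point identity and the linear cross-term into the stationarity inequality for $\pi_\ell$ is what should generate the bound. The difficulty is that the slope $\rho$ active for the reference pair $(\pi^\star,\pi_\ell)$ generally differs from the one active for the iterate pair $(\pi_\ell,\widetilde\pi_\ell)$: a naive three-point attempt to lower-bound $F^{\bf t}(\pi^\star,\widetilde\pi_\ell,R)-E_c^{\bf t}(\widetilde\pi_\ell,R,\alpha)$ by the full divergence $D^{\bf t}(\pi^\star,\pi_\ell)$ fails as soon as the active slope is strictly positive, since the divergence then enters only with weight $1-\rho$. Showing instead that, once all four $\max$-terms in (\ref{eqFourPoint}) are combined, these slope-dependent deficits cancel precisely because $\pi_\ell$ is an exact minimizer rather than an arbitrary competitor, is the heart of the proof and the place where the Csisz\'ar--Tusn\'ady method must be modified. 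Once (\ref{eqFourPoint}) is in hand, the same scheme should apply in the fixed-slope and mixed settings of Sections~\ref{FixedSlope}--\ref{FixedMixed2} upon replacing the maximum by the appropriate linearized functional.
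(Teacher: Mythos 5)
Your overall architecture coincides with the paper's: fix a minimizer $\hat{Q}\hat{W}$ of the right-hand side of (\ref{eqConvergence}) (rewritten via (\ref{eqAchieve}) as a minimum of ${F\mathstrut}^{\bf t}(QW,QW,R)$), prove a five-point-type inequality relating one iteration step to $\hat{Q}\hat{W}$, and telescope a bounded-below potential to force the non-increasing sequence down to the minimum. The telescoping half of your argument is sound. The problem is that you never prove the inequality (\ref{eqFourPoint}) that everything rests on --- you explicitly defer it as ``the main obstacle'' and ``the heart of the proof'' --- and this is precisely the content of the paper's Lemma~\ref{lem5PP}. A proof that states its key lemma without proving it has a genuine gap, and here the gap is compounded by the fact that the unclipped form you propose appears to be \emph{too strong}. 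The paper's Lemma~\ref{lem5PP} carries a positive part, $\big|{F\mathstrut}_{1}^{\bf t}(\hat{Q}\hat{W},{\widetilde{Q}\mathstrut}_{0}{\,\widetilde{\!W}\mathstrut}_{\!0})-{F\mathstrut}_{1}^{\bf t}(\hat{Q}\hat{W},{\widetilde{Q}\mathstrut}_{1}{\,\widetilde{\!W}\mathstrut}_{\!1})\big|^{+}$, for a structural reason: when the rate-dependent branch ${F\mathstrut}_{2}$ of the maximum is the active one at the iterate ${Q\mathstrut}_{0}{W\mathstrut}_{\!0}$, the bound is obtained from plain convexity of ${F\mathstrut}_{2}$ with \emph{no} correction term at all, and in that case nothing prevents ${D\mathstrut}^{\bf t}(\hat{Q}\hat{W},{Q\mathstrut}_{0}{W\mathstrut}_{\!0})$ from exceeding ${D\mathstrut}^{\bf t}(\hat{Q}\hat{W},{\widetilde{Q}\mathstrut}_{0}{\,\widetilde{\!W}\mathstrut}_{\!0})$ (it can even be infinite), which would make the right side of your (\ref{eqTele}) negative or undefined while the left side is nonnegative. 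Your version of the inequality is therefore not merely unproven; in the very case you identify as the difficulty it is in doubt, and the clipping (or the resulting dichotomy used in the paper's proof of the theorem) is what rescues the argument.

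The paper's resolution of the ``which branch is active'' issue is also simpler than the route you sketch. Rather than linearizing the maximum as $\sup_{0\le\rho\le1}\{(1-\rho){F\mathstrut}_{1}^{\bf t}+\rho {F\mathstrut}_{2}\}$ and introducing Lagrange multipliers, Lemma~\ref{lem5PP} does a three-way case analysis at the minimizer: if ${F\mathstrut}_{1}^{\bf t}$ is strictly active, the non-negativity of the directional derivative of ${f\mathstrut}_{1}(\lambda)$ along the segment toward $\hat{Q}\hat{W}$ yields the Pythagorean inequality (\ref{eqPyth}), which combined with the identity (\ref{eqIdent}) gives the bound with the divergence-difference term; if ${F\mathstrut}_{2}$ is strictly active, convexity of ${F\mathstrut}_{2}$ over the set ${\cal S}$ gives the bound outright; in the tie case at least one of the two directional derivatives must be non-negative and either subcase closes. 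Note that in the first case your (\ref{eqFourPoint}) does follow (since $\max\{a+d,b\}\le\max\{a,b\}+d$ for $d\ge0$), so the cancellation you hope for is real there; it is the second case where your formulation breaks and where the paper's $|\cdot|^{+}$ is doing the work. To complete your proof you would need to either weaken (\ref{eqFourPoint}) to the clipped form and add the corresponding dichotomy in the telescoping step, or supply the case analysis above.
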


\bigskip

%Throughout the paper, we also use notation $\text{supp}(QW) \triangleq \{(x, y)\in {\cal X}\times {\cal Y}: Q(x)W(y\, | \, x) > 0\}$.
Suppose ${Q\mathstrut}^{*}{W\mathstrut}^{*}$ is a minimizing solution of (\ref{eqDCcordecexp}).
If the initial distribution $\widetilde{Q}_{0}\,\widetilde{\!W}_{\!0}$ in the iterations (\ref{eqMinProcedure}) is chosen such that ${D\mathstrut}^{\bf t}({Q\mathstrut}^{*}{W\mathstrut}^{*}\!, \, \widetilde{Q}_{0}\,\widetilde{\!W}_{\!0}) < +\infty$
(for example, if $\text{support}(\widetilde{Q}_{0}\,\widetilde{\!W}_{\!0}) = {\cal X}\times {\cal Y}$), then
by (\ref{eqEquality})
%for any vector of non-negative parameters ${\bf t}$,
the RHS of (\ref{eqConvergence})
gives (\ref{eqDCcordecexp}).
The choice of ${\bf t} = (1, 0, 0, 0)$ in (\ref{eqCombination}) corresponds to the iterative minimization in \cite{TridenskiZamir18Trans}.
In order to prove Theorem~\ref{thmConvergence}, we use a lemma, which is similar to ``the five points property'' from \cite{CsiszarTusnady84}.

\bigskip

\begin{lemma} %[]
\label{lem5PP}%\newline
{\em Let $\hat{Q}\hat{W}$ be such, that $\,\sum_{x}\hat{Q}(x)f(x) \leq \alpha$
and ${F\mathstrut}_{1}^{\bf t}(\hat{Q}\hat{W}, \,{\widetilde{Q}\mathstrut}_{0}{\,\widetilde{\!W}\mathstrut}_{\!0}) < +\infty$.
%If ${F\mathstrut}_{1}({U\mathstrut}_{\!0}{W\mathstrut}_{\!0}, \,{Q\mathstrut}_{0}) > {F\mathstrut}_{2}({U\mathstrut}_{\!0}{W\mathstrut}_{\!0}, \,R)$, then %$\text{\em supp}(\hat{U}) \subseteq \text{\em supp}({Q\mathstrut}_{1})$ and
Then}
\begin{align}
& {F\mathstrut}^{\bf t}({Q\mathstrut}_{0}{W\mathstrut}_{\!0}, \,{\widetilde{Q}\mathstrut}_{0}{\,\widetilde{\!W}\mathstrut}_{\!0}, \, R)
\;\; \leq \;\;
{F\mathstrut}^{\bf t}(\hat{Q}\hat{W}, \hat{Q}\hat{W}, R)
\nonumber \\
&
\;\;\;\;\;\;\;\;\;\;\;\;\;\;\,
+ \big|{F\mathstrut}_{1}^{\bf t}( \hat{Q}\hat{W} , \, {\widetilde{Q}\mathstrut}_{0}{\,\widetilde{\!W}\mathstrut}_{\!0})  -   {F\mathstrut}_{1}^{\bf t}( \hat{Q}\hat{W} , \, {\widetilde{Q}\mathstrut}_{1}{\,\widetilde{\!W}\mathstrut}_{\!1})\big|^{+}.
\label{eqBound1}
\end{align}
\end{lemma}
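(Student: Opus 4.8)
The plan is to recognize \eqref{eqBound1} as the Csisz\'ar--Tusn\'ady ``five points property'' for the bivariate objective $F^{\mathbf t}$, and to derive it from the \emph{first-order optimality} of the inner minimizer ${Q\mathstrut}_0{W\mathstrut}_0$. First I would record that, viewed as functions of the joint distribution $QW=TV$ on $\mathcal X\times\mathcal Y$, both ${F\mathstrut}_1^{\mathbf t}(\cdot,{\widetilde Q\mathstrut}_0{\widetilde W\mathstrut}_0)$ and ${F\mathstrut}_2(\cdot,R)$ are convex: each summand of ${D\mathstrut}^{\mathbf t}$ is a relative entropy precomposed with an affine marginal map, $D(W\|P|Q)=D(QW\|Q\otimes P)$ is convex by joint convexity of relative entropy (the reference $Q\otimes P$ being affine in $QW$), and ${F\mathstrut}_2=\sum_{x,y}QW\log\frac{T}{P}+R$ is convex because $T\mapsto\sum_yT\log T$ is. Hence $F^{\mathbf t}=\max\{{F\mathstrut}_1^{\mathbf t},{F\mathstrut}_2\}$ is convex, and since the constraint $\sum_xQ(x)f(x)\le\alpha$ cuts out a convex feasible set, ${Q\mathstrut}_0{W\mathstrut}_0$ is a global minimizer. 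For any admissible $\hat Q\hat W$ the segment $QW_\lambda=(1-\lambda)\,{Q\mathstrut}_0{W\mathstrut}_0+\lambda\,\hat Q\hat W$ stays feasible, so the right derivative of $\lambda\mapsto F^{\mathbf t}(QW_\lambda,{\widetilde Q\mathstrut}_0{\widetilde W\mathstrut}_0,R)$ at $\lambda=0$ is nonnegative.

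Next I would compute the two component derivatives along this segment using the elementary identity $\frac{d}{d\lambda}D\big((1-\lambda)\mu_0+\lambda\mu_1\,\|\,\nu\big)\big|_{0}=D(\mu_1\|\nu)-D(\mu_1\|\mu_0)-D(\mu_0\|\nu)$, applied term by term in both the forward $QW$ and backward $TV$ factorizations (for $D(W\|P|Q)$ one differentiates $-H(QW)+H(Q)-\sum_{x,y}QW\log P$ directly). Collecting the ``far'', ``cross'' and ``near'' pieces, I expect
\[
\Delta_1:=\tfrac{d^{+}}{d\lambda}{F\mathstrut}_1^{\mathbf t}(QW_\lambda,{\widetilde Q\mathstrut}_0{\widetilde W\mathstrut}_0)\big|_0=D(\hat W\|P|\hat Q)-D(\hat W\|W_0|\hat Q)+(\delta_0-\delta_1)-{F\mathstrut}_1^{\mathbf t}({Q\mathstrut}_0{W\mathstrut}_0,{\widetilde Q\mathstrut}_0{\widetilde W\mathstrut}_0),
\]
\[
\Delta_2:=\tfrac{d^{+}}{d\lambda}{F\mathstrut}_2(QW_\lambda,R)\big|_0=\;{F\mathstrut}_2(\hat Q\hat W,R)-D(\hat T\|T_0)-{F\mathstrut}_2({Q\mathstrut}_0{W\mathstrut}_0,R),
\]
where $D(\hat W\|W_0|\hat Q)\ge0$ and $D(\hat T\|T_0)\ge0$ are nonnegative cross terms, and $\delta_0-\delta_1:={D\mathstrut}^{\mathbf t}(\hat Q\hat W,{\widetilde Q\mathstrut}_0{\widetilde W\mathstrut}_0)-{D\mathstrut}^{\mathbf t}(\hat Q\hat W,{Q\mathstrut}_0{W\mathstrut}_0)$. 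Crucially, since ${\widetilde Q\mathstrut}_1{\widetilde W\mathstrut}_1={Q\mathstrut}_0{W\mathstrut}_0$ and the $D(W\|P|Q)$ parts cancel, this equals ${F\mathstrut}_1^{\mathbf t}(\hat Q\hat W,{\widetilde Q\mathstrut}_0{\widetilde W\mathstrut}_0)-{F\mathstrut}_1^{\mathbf t}(\hat Q\hat W,{\widetilde Q\mathstrut}_1{\widetilde W\mathstrut}_1)$, the exact quantity appearing in \eqref{eqBound1}.

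Finally I would assemble the bound using the standard right-derivative formula for a pointwise maximum: writing $A\subseteq\{1,2\}$ for the components attaining $F^{\mathbf t}({Q\mathstrut}_0{W\mathstrut}_0,{\widetilde Q\mathstrut}_0{\widetilde W\mathstrut}_0,R)=\max\{{F\mathstrut}_1^{\mathbf t}({Q\mathstrut}_0{W\mathstrut}_0,{\widetilde Q\mathstrut}_0{\widetilde W\mathstrut}_0),{F\mathstrut}_2({Q\mathstrut}_0{W\mathstrut}_0,R)\}$, optimality gives $\max_{i\in A}\Delta_i\ge0$. If an active component with $\Delta_i\ge0$ is ${F\mathstrut}_1^{\mathbf t}$, then discarding $-D(\hat W\|W_0|\hat Q)\le0$ in $\Delta_1\ge0$ yields ${F\mathstrut}_1^{\mathbf t}({Q\mathstrut}_0{W\mathstrut}_0,{\widetilde Q\mathstrut}_0{\widetilde W\mathstrut}_0)\le D(\hat W\|P|\hat Q)+(\delta_0-\delta_1)$; since $D(\hat W\|P|\hat Q)={F\mathstrut}_1^{\mathbf t}(\hat Q\hat W,\hat Q\hat W)\le F^{\mathbf t}(\hat Q\hat W,\hat Q\hat W,R)$ and $\delta_0-\delta_1\le|\delta_0-\delta_1|^{+}$, this is exactly \eqref{eqBound1}. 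If instead it is ${F\mathstrut}_2$, then discarding $-D(\hat T\|T_0)\le0$ in $\Delta_2\ge0$ gives ${F\mathstrut}_2({Q\mathstrut}_0{W\mathstrut}_0,R)\le {F\mathstrut}_2(\hat Q\hat W,R)\le F^{\mathbf t}(\hat Q\hat W,\hat Q\hat W,R)$, again dominated by the right-hand side. As the left-hand side equals the active value in each case, \eqref{eqBound1} follows.

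The step I expect to be most delicate is the differentiation of the conditional divergences along the linear path, because $D(W\|P|Q)$ and $D(V\|\widetilde V_0|T)$ carry references that move with the marginals of $QW_\lambda$, and $\Delta_1$ can equal $-\infty$ precisely when ${Q\mathstrut}_0{W\mathstrut}_0$ fails to dominate $\hat Q\hat W$ (so that $D(\hat W\|W_0|\hat Q)$ or ${D\mathstrut}^{\mathbf t}(\hat Q\hat W,{Q\mathstrut}_0{W\mathstrut}_0)$ is infinite). I would argue this does not break the proof: lower semicontinuity of $F^{\mathbf t}$ on the compact feasible set guarantees the minimizer and keeps $\max_{i\in A}\Delta_i\ge0$ valid, so whenever $\Delta_1=-\infty$ the nonnegative derivative must be supplied by ${F\mathstrut}_2$, forcing the argument into the finite $\Delta_2$ branch; the correction $|\delta_0-\delta_1|^{+}$ is then only invoked in the $\Delta_1$ branch, where every quantity is finite and $\delta_0-\delta_1\le|\delta_0-\delta_1|^{+}$ is legitimate. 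The finiteness of this whole construction rests on the standing hypothesis ${F\mathstrut}_1^{\mathbf t}(\hat Q\hat W,{\widetilde Q\mathstrut}_0{\widetilde W\mathstrut}_0)<+\infty$.
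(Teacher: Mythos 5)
Your proposal is correct and follows essentially the same route as the paper's proof: a convex combination between ${Q\mathstrut}_{0}{W\mathstrut}_{\!0}$ and $\hat{Q}\hat{W}$, nonnegativity of the one-sided derivative at the minimizer, the resulting Pythagorean-type inequality for the ${F\mathstrut}_{1}^{\bf t}$ branch, and a case split on which component of the maximum is active (your Danskin-style active-set formulation is just a compact packaging of the paper's three explicit cases). The derivative formulas $\Delta_1,\Delta_2$ and the handling of possibly infinite cross-divergences match the paper's (\ref{eqPyth})--(\ref{eqAlmost}) and its finiteness remark.
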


\bigskip

\begin{proof}
Let us define a set of distributions $QW$:
\begin{align}
&
{\cal S} \triangleq \bigg\{QW: \; \sum_{x}Q(x)f(x) \leq \alpha,
\;
{F\mathstrut}_{1}^{\bf t}(QW, \,{\widetilde{Q}\mathstrut}_{0}{\,\widetilde{\!W}\mathstrut}_{\!0}) < +\infty\bigg\}.
\nonumber
%&
%\;\;\;\;\;\;\;\;\;\;\;\;\;\;\;\;\;\;\;\;\;\;\;\;\;
%\text{supp}(QW) \subseteq \text{supp}(QP)
%\cap \text{supp}({\widetilde{Q}\mathstrut}_{0}{\,\widetilde{\!W}\mathstrut}_{\!0})
%\bigg\}.
%\nonumber
\end{align}
Observe that ${\cal S}$ is a closed convex set. Since $\hat{Q}\hat{W} \in {\cal S}$,
then ${\cal S}$ is non-empty and by (\ref{eqMinProcedure}) we have also that ${Q\mathstrut}_{0}{W\mathstrut}_{\!0} \in {\cal S}$.
Observe further that %both
the two
terms in the maximization of (\ref{eqF}),
${F\mathstrut}_{1}^{\bf t}(QW, \,{\widetilde{Q}\mathstrut}_{0}{\,\widetilde{\!W}\mathstrut}_{\!0})$
and ${F\mathstrut}_{2}(QW, R)$, as functions of $QW$, are convex ($\cup$) and continuous in ${\cal S}$.

Consider the case ${F\mathstrut}_{1}^{\bf t}({Q\mathstrut}_{0}{W\mathstrut}_{\!0}, \,{\widetilde{Q}\mathstrut}_{0}{\,\widetilde{\!W}\mathstrut}_{\!0}) > {F\mathstrut}_{2}({Q\mathstrut}_{0}{W\mathstrut}_{\!0}, \,R)$ first.
Then ${F\mathstrut}^{\bf t}({Q\mathstrut}_{0}{W\mathstrut}_{\!0}, \,{\widetilde{Q}\mathstrut}_{0}{\,\widetilde{\!W}\mathstrut}_{\!0}, \, R) = {F\mathstrut}_{1}^{\bf t}({Q\mathstrut}_{0}{W\mathstrut}_{\!0}, \,{\widetilde{Q}\mathstrut}_{0}{\,\widetilde{\!W}\mathstrut}_{\!0})$ by (\ref{eqF}).
%Then ${F\mathstrut}_{1}({U\mathstrut}_{\!0}{W\mathstrut}_{\!0}, \,{Q\mathstrut}_{0})$ cannot be decreased in the vicinity of %${U\mathstrut}_{\!0}{W\mathstrut}_{\!0}$ by (\ref{eqMinProcedure}).
%More specifically,
%observe that the point ${U\mathstrut}_{\!0}{W\mathstrut}_{\!0}$ belongs to a closed convex set of distributions $UW$ with $\text{supp}(UW)\subseteq %\text{supp}({Q\mathstrut}_{0}P)$, satisfying the channel input constraint $\alpha$.
%Observe that the function ${F\mathstrut}_{1}^{\bf t}(QW, \,{\widetilde{Q}\mathstrut}_{0}{\,\widetilde{\!W}\mathstrut}_{\!0})$
%is convex ($\cup$) in ${\cal S}$, while the second function in the %$\max$,
%maximization in (\ref{eqF}),
%${F\mathstrut}_{2}(QW, R) = D( W \, \| \, P \, | \, Q) - I(Q, W)  +  R$, is continuous in ${\cal S}$.
By (\ref{eqMinProcedure}), we conclude that ${F\mathstrut}_{1}^{\bf t}({Q\mathstrut}_{0}{W\mathstrut}_{\!0}, \,{\widetilde{Q}\mathstrut}_{0}{\,\widetilde{\!W}\mathstrut}_{\!0})$
cannot be decreased in the vicinity of $QW = {Q\mathstrut}_{0}{W\mathstrut}_{\!0}$ inside the convex set ${\cal S}$.
Let us define a point inside ${\cal S}$:
\begin{align}
& {Q\mathstrut}^{(\lambda)}(x){W\mathstrut}^{(\lambda)}(y \, | \,x) \; \triangleq
\label{eqConvCombin} \\
& \lambda \hat{Q}(x)\hat{W}(y \, | \,x) \, + \, (1 - \lambda){Q\mathstrut}_{0}(x){W\mathstrut}_{\!0}(y \, | \,x), \;\;\;\;\;\; \lambda \, \in \, (0, 1).
\nonumber
\end{align}
We have that ${Q\mathstrut}^{(\lambda)}{W\mathstrut}^{(\lambda)}\in {\cal S}$,
and the function ${f\mathstrut}_{1}(\lambda) \triangleq {F\mathstrut}_{1}^{\bf t}({Q\mathstrut}^{(\lambda)}{W\mathstrut}^{(\lambda)}, \,{\widetilde{Q}\mathstrut}_{0}{\,\widetilde{\!W}\mathstrut}_{\!0})$
is convex ($\cup$) and differentiable %functions of
w.r.t.
$\lambda \in (0, 1)$.
Since ${f\mathstrut}_{1}(\lambda)$ has to be non-decreasing at $\lambda = 0$, the following condition must hold:
\begin{equation} \label{eqFunc1}
\lim_{\lambda \, \rightarrow \, 0} \frac{d{f\mathstrut}_{1}(\lambda)}{d\lambda} \; \geq \; 0.
\end{equation}
Differentiating ${f\mathstrut}_{1}(\lambda)$, similarly as in the proof of
the ``Pythagorean'' theorem for divergence \cite{CoverThomas} (proved as
``the three points property'' in \cite[Lemma~2]{CsiszarTusnady84}),
we obtain:
\begin{align}
& {F\mathstrut}_{1}^{\bf t}({Q\mathstrut}_{0}{W\mathstrut}_{\!0}, \,{\widetilde{Q}\mathstrut}_{0}{\,\widetilde{\!W}\mathstrut}_{\!0})
+ D(\hat{W} \, \| \, {W\mathstrut}_{\!0} \, | \, \hat{Q})
+ {D\mathstrut}^{\bf t}( \hat{Q}\hat{W} , \, {Q\mathstrut}_{0}{W\mathstrut}_{\!0})
\nonumber \\
&
\leq \;\;
{F\mathstrut}_{1}^{\bf t}(\hat{Q}\hat{W}, \,{\widetilde{Q}\mathstrut}_{0}{\,\widetilde{\!W}\mathstrut}_{\!0}).
\label{eqPyth}
\end{align}
Since %$\text{supp}(\hat{Q}\hat{W}) \subseteq \text{supp}(\hat{Q}P)
%\cap \text{supp}({\widetilde{Q}\mathstrut}_{0}{\,\widetilde{\!W}\mathstrut}_{\!0})$, we have %also
${F\mathstrut}_{1}^{\bf t}(\hat{Q}\hat{W}, \,{\widetilde{Q}\mathstrut}_{0}{\,\widetilde{\!W}\mathstrut}_{\!0}) < +\infty$,
then %by (\ref{eqPyth})
%it holds that
the divergences on %its
the LHS of (\ref{eqPyth}) are also finite.
By the definition (\ref{eqF1}),
\begin{equation} \label{eqIdent}
{F\mathstrut}_{1}^{\bf t}(\hat{Q}\hat{W}, \,{\widetilde{Q}\mathstrut}_{0}{\,\widetilde{\!W}\mathstrut}_{\!0}) \; = \;
{F\mathstrut}_{1}^{\bf t}(\hat{Q}\hat{W}, \,\hat{Q}\hat{W})
+ {D\mathstrut}^{\bf t}( \hat{Q}\hat{W} , \, {\widetilde{Q}\mathstrut}_{0}{\,\widetilde{\!W}\mathstrut}_{\!0}).
\end{equation}
Omitting $D(\hat{W} \, \| \, {W\mathstrut}_{\!0} \, | \, \hat{Q}) \geq 0$ from (\ref{eqPyth}),
noting that ${Q\mathstrut}_{0}{W\mathstrut}_{\!0} = {\widetilde{Q}\mathstrut}_{1}{\,\widetilde{\!W}\mathstrut}_{\!1}$, and combining (\ref{eqPyth}) with (\ref{eqIdent}),
we get
\begin{align}
& {F\mathstrut}_{1}^{\bf t}({Q\mathstrut}_{0}{W\mathstrut}_{\!0}, \,{\widetilde{Q}\mathstrut}_{0}{\,\widetilde{\!W}\mathstrut}_{\!0})
\;\; \leq \;\; {F\mathstrut}_{1}^{\bf t}(\hat{Q}\hat{W}, \,\hat{Q}\hat{W})
\nonumber \\
&
\;\;\;\;\;\;\;\;\;\;\;\,
+ {D\mathstrut}^{\bf t}( \hat{Q}\hat{W} , \, {\widetilde{Q}\mathstrut}_{0}{\,\widetilde{\!W}\mathstrut}_{\!0})
-
{D\mathstrut}^{\bf t}( \hat{Q}\hat{W} , \, {\widetilde{Q}\mathstrut}_{1}{\,\widetilde{\!W}\mathstrut}_{\!1}).
\label{eqAlmost}
\end{align}
Now, (\ref{eqBound1}) follows because ${F\mathstrut}^{\bf t}({Q\mathstrut}_{0}{W\mathstrut}_{\!0}, \,{\widetilde{Q}\mathstrut}_{0}{\,\widetilde{\!W}\mathstrut}_{\!0}, \, R) = \newline {F\mathstrut}_{1}^{\bf t}({Q\mathstrut}_{0}{W\mathstrut}_{\!0}, \,{\widetilde{Q}\mathstrut}_{0}{\,\widetilde{\!W}\mathstrut}_{\!0})$
and ${F\mathstrut}_{1}^{\bf t}(\hat{Q}\hat{W}, \,\hat{Q}\hat{W}) \leq {F\mathstrut}^{\bf t}(\hat{Q}\hat{W}, \hat{Q}\hat{W}, R)$.

Consider the case ${F\mathstrut}_{1}^{\bf t}({Q\mathstrut}_{0}{W\mathstrut}_{\!0}, \,{\widetilde{Q}\mathstrut}_{0}{\,\widetilde{\!W}\mathstrut}_{\!0}) < {F\mathstrut}_{2}({Q\mathstrut}_{0}{W\mathstrut}_{\!0}, \,R)$ next.
Then ${F\mathstrut}^{\bf t}({Q\mathstrut}_{0}{W\mathstrut}_{\!0}, \,{\widetilde{Q}\mathstrut}_{0}{\,\widetilde{\!W}\mathstrut}_{\!0}, \, R) = {F\mathstrut}_{2}({Q\mathstrut}_{0}{W\mathstrut}_{\!0}, \,R)$ by (\ref{eqF}).
By (\ref{eqMinProcedure}), we conclude that ${F\mathstrut}_{2}({Q\mathstrut}_{0}{W\mathstrut}_{\!0}, \,R)$
cannot be decreased in the vicinity of $QW = {Q\mathstrut}_{0}{W\mathstrut}_{\!0}$ inside the convex set ${\cal S}$, and by convexity ($\cup$)
of ${F\mathstrut}_{2}(QW, R)$ it follows that
\begin{align}
{F\mathstrut}_{2}({Q\mathstrut}_{0}{W\mathstrut}_{\!0}, \,R) \;\; & = \;\;
\min_{\substack{\\QW \in \, {\cal S}
}} {F\mathstrut}_{2}(QW, R)
\nonumber \\
& \overset{(a)}{\leq} \;\; {F\mathstrut}_{2}(\hat{Q}\hat{W}, R) \;\; \overset{(b)}{\leq} \;\; {F\mathstrut}^{\bf t}(\hat{Q}\hat{W}, \,\hat{Q}\hat{W}, \,R),
\nonumber
\end{align}
where ($a$) follows because $\hat{Q}\hat{W}\in {\cal S}$, %by (\ref{eqAchieve}),
and ($b$) follows by (\ref{eqF}). This again gives (\ref{eqBound1}).

Finally, assume now the equality ${F\mathstrut}_{1}^{\bf t}({Q\mathstrut}_{0}{W\mathstrut}_{\!0}, \,{\widetilde{Q}\mathstrut}_{0}{\,\widetilde{\!W}\mathstrut}_{\!0}) = {F\mathstrut}_{2}({Q\mathstrut}_{0}{W\mathstrut}_{\!0}, \,R)$.
In this case, using the definition (\ref{eqConvCombin}), we
look at two functions: ${f\mathstrut}_{1}(\lambda)$ and
${f\mathstrut}_{2}(\lambda) \triangleq {F\mathstrut}_{2}({Q\mathstrut}^{(\lambda)}{W\mathstrut}^{(\lambda)}, R)$,
both of which are convex ($\cup$) and differentiable w.r.t. $\lambda \in (0, 1)$.
At least one of these two functions
has to be non-decreasing at $\lambda = 0$.
This implies either (\ref{eqFunc1}) or
\begin{equation} \label{eqFunc2}
\lim_{\lambda \, \rightarrow \, 0} \frac{d{f\mathstrut}_{2}(\lambda)}{d\lambda} \; \geq \; 0.
\end{equation}
The condition (\ref{eqFunc1}) results in (\ref{eqBound1}) as before,
while (\ref{eqFunc2}) by convexity ($\cup$) of ${f\mathstrut}_{2}(\lambda)$ implies
\begin{displaymath}
{F\mathstrut}_{2}({Q\mathstrut}_{0}{W\mathstrut}_{\!0}, R) \; \leq \; {F\mathstrut}_{2}(\hat{Q}\hat{W}, R)
\; \leq \; {F\mathstrut}^{\bf t}(\hat{Q}\hat{W}, \hat{Q}\hat{W}, R),
\end{displaymath}
where the second inequality is by definition (\ref{eqF}). Since
${F\mathstrut}_{2}({Q\mathstrut}_{0}{W\mathstrut}_{\!0}, \,R) = {F\mathstrut}^{\bf t}({Q\mathstrut}_{0}{W\mathstrut}_{\!0}, \,{\widetilde{Q}\mathstrut}_{0}{\,\widetilde{\!W}\mathstrut}_{\!0}, \, R)$,
this gives (\ref{eqBound1}).
\end{proof}

\bigskip

A similar, alternative, lemma can be proved if we add ${D\mathstrut}^{\bf t}(QW, \,\widetilde{Q}\,\widetilde{\!W})$
to the second term of the maximum in (\ref{eqEquality}), and not to the first.

\bigskip

{\em Proof of Theorem~\ref{thmConvergence}:} %\newline
By (\ref{eqEquality}) we can rewrite the RHS of (\ref{eqConvergence}) as
\begin{equation} \label{eqAchieve}
\min_{\substack{\\\widetilde{Q}, \, \,\widetilde{\!W}:\\
{D\mathstrut}^{\bf t}(\widetilde{Q}\,\widetilde{\!W}\!, \,{\widetilde{Q}\mathstrut}_{0}{\,\widetilde{\!W}\mathstrut}_{\!0})
\,<\, \infty}}
\!\!\!\!\!\!\!\!\!\!\!\!\!
{E\mathstrut}_{\!c}^{\bf t}(\widetilde{Q}\,\widetilde{\!W}, R, \alpha)
\; = \!\!
\min_{\substack{\\ Q, \, W:\\
\mathbb{E}_{Q}[f(X)]\;\leq \; \alpha
\\ {D\mathstrut}^{\bf t}(QW, \,{\widetilde{Q}\mathstrut}_{0}{\,\widetilde{\!W}\mathstrut}_{\!0}) \,<\, \infty
}}
\!\!\!\!\!\!\!\!\!\!\!\!\!
{F\mathstrut}^{\bf t}(QW, QW, R).
\end{equation}
Suppose (\ref{eqAchieve}) is finite, and let $\hat{Q}\hat{W}$ achieve the RHS $\min$ in (\ref{eqAchieve}). Then
${F\mathstrut}_{1}^{\bf t}(\hat{Q}\hat{W}, \,{\widetilde{Q}\mathstrut}_{0}{\,\widetilde{\!W}\mathstrut}_{\!0}) < +\infty$
and $\,\sum_{x}\hat{Q}(x)f(x) \leq \alpha$.
Then Lemma~\ref{lem5PP} implies that there exist only two possibilities for the outcome of the iterations in (\ref{eqMinProcedure}).
One possibility is that at some iteration $\ell$ it holds that
\begin{displaymath}
{F\mathstrut}^{\bf t}({Q\mathstrut}_{\ell}{W\mathstrut}_{\!\ell}, \,{\widetilde{Q}\mathstrut}_{\ell}{\,\widetilde{\!W}\mathstrut}_{\!\ell}, \, R)
\;\; \leq \;\;
{F\mathstrut}^{\bf t}(\hat{Q}\hat{W}, \hat{Q}\hat{W}, R),
\end{displaymath}
meaning that the monotonically non-increasing sequence of
${F\mathstrut}^{\bf t}({Q\mathstrut}_{\ell}\,{W\mathstrut}_{\!\ell}, \,{\widetilde{Q}\mathstrut}_{\ell}{\,\widetilde{\!W}\mathstrut}_{\!\ell}, \, R) = {E\mathstrut}_{\!c}^{\bf t}({\widetilde{Q}\mathstrut}_{\ell}{\,\widetilde{\!W}\mathstrut}_{\!\ell}, R, \alpha)$
has converged to (\ref{eqAchieve}).
The alternative possibility is that for {\em all} iterations $\ell = 0, 1, 2, ... \, ,$
it holds that
\begin{align}
& {F\mathstrut}^{\bf t}({Q\mathstrut}_{\ell}{W\mathstrut}_{\!\ell}, \,{\widetilde{Q}\mathstrut}_{\ell}{\,\widetilde{\!W}\mathstrut}_{\!\ell}, \, R)
\;\; \leq \;\; {F\mathstrut}^{\bf t}(\hat{Q}\hat{W}, \hat{Q}\hat{W}, R)
\nonumber \\
&
\;\;\;\;\;\;\;\;\;\;\;\;\;\,
+ \, {F\mathstrut}_{1}^{\bf t}( \hat{Q}\hat{W} , \, {\widetilde{Q}\mathstrut}_{\ell}{\,\widetilde{\!W}\mathstrut}_{\!\ell})  \, - \,   {F\mathstrut}_{1}^{\bf t}( \hat{Q}\hat{W} , \, {\widetilde{Q}\mathstrut}_{\ell \, + \, 1}{\,\widetilde{\!W}\mathstrut}_{\!\ell \, + \, 1}),
\nonumber
\end{align}
with all terms finite.
Now, just like in \cite[Lemma~1]{CsiszarTusnady84}, it has to be true %cannot be true
that
\begin{displaymath}
\liminf_{\ell \, \rightarrow \, \infty} \,\Big\{{F\mathstrut}_{1}^{\bf t}( \hat{Q}\hat{W} , \, {\widetilde{Q}\mathstrut}_{\ell}{\,\widetilde{\!W}\mathstrut}_{\!\ell})  \, - \,   {F\mathstrut}_{1}^{\bf t}( \hat{Q}\hat{W} , \, {\widetilde{Q}\mathstrut}_{\ell \, + \, 1}{\,\widetilde{\!W}\mathstrut}_{\!\ell \, + \, 1})\Big\} \; \leq \; 0,
\end{displaymath}
because the divergences in (\ref{eqF1}) are non-negative (i.e., bounded from below). Therefore ${F\mathstrut}^{\bf t}({Q\mathstrut}_{\ell}{W\mathstrut}_{\!\ell}, \,{\widetilde{Q}\mathstrut}_{\ell}{\,\widetilde{\!W}\mathstrut}_{\!\ell}, \, R)$
must converge to ${F\mathstrut}^{\bf t}(\hat{Q}\hat{W}, \hat{Q}\hat{W}, R)$, yielding
(\ref{eqAchieve}), and this concludes the proof of Theorem~\ref{thmConvergence}.
$\square$

\bigskip

\section{Convergence of the iterative minimization for fixed gradient}\label{FixedSlope}

\bigskip

Let us define for two real numbers $0 \leq \rho < 1$ and $\eta \geq 0$
\begin{align}
& {F\mathstrut}^{\bf t}(\rho, \,\eta, \,QW, \,\widetilde{Q}\,\widetilde{\!W}) \;\; \triangleq \;\; D(W \, \| \, P \, | \, Q)
\, - \, \rho \,I(Q, W)
\nonumber \\
&
\;\;\;\;\;\;\;\;\;\;\;\;\;\;\;\,\,
+ \, \eta \,\mathbb{E}_{Q}[f(X)]
\, + \, (1 - \rho){D\mathstrut}^{\bf t}(QW, \,\widetilde{Q}\,\widetilde{\!W}),
\label{eqFixed} \\
%\end{align}
%where the expectaion $\mathbb{E}\,[f(X)]$ is according to $U(x)$.
%\begin{equation}
&
\;\;\;\;\;\;\;\;\,\,
{E\mathstrut}_{0}^{\bf t}(\rho, \eta, \widetilde{Q}\,\widetilde{\!W}) \;\; \triangleq \;\;
\min_{\substack{\\Q, \, W
}}\;
{F\mathstrut}^{\bf t}(\rho, \,\eta, \,QW, \,\widetilde{Q}\,\widetilde{\!W}).
\label{eqE0}
\end{align}
If finite, the quantity ${E\mathstrut}_{0}^{\bf t}(\rho, \eta, \widetilde{Q}\,\widetilde{\!W})$ has a meaning of the vertical axis intercept (``${E\mathstrut}_{0}$'') of a lower supporting plane in the variables $(R, \alpha)$ for the function $E(R, \alpha) = {E\mathstrut}_{\!c}^{\bf t}(\widetilde{Q}\,\widetilde{\!W}, R, \alpha)$, defined in (\ref{eqDef}),
%which is a surface in variables $(R, \alpha)$, as defined by (\ref{eqDef}).
as the following lemma shows.

\bigskip

\begin{lemma} %[]
\label{lemSuppLine}%\newline
{\em For any $0 \leq \rho < 1$ and $\eta \geq 0$ it holds that}
\begin{equation} \label{eqEquiv}
{E\mathstrut}_{\!c}^{\bf t}(\widetilde{Q}\,\widetilde{\!W}, R, \alpha) \;\; \geq \;\;
{E\mathstrut}_{0}^{\bf t}(\rho, \eta, \widetilde{Q}\,\widetilde{\!W}) \, + \, \rho R \, - \, \eta \alpha,
\end{equation}
{\em and there exist $R \geq 0$ and $\alpha \geq \min_{\,x} f(x)$
which satisfy (\ref{eqEquiv}) with equality.}
%that there is equality in (\ref{eqEquiv}).}
\end{lemma}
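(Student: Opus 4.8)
\emph{Proof proposal.} The plan is to read (\ref{eqEquiv}) as the statement that the affine map $(R,\alpha)\mapsto {E\mathstrut}_{0}^{\bf t}(\widetilde{Q}\,\widetilde{\!W}\text{-data})+\rho R-\eta\alpha$ is a lower supporting plane of $E(R,\alpha)\triangleq {E\mathstrut}_{\!c}^{\bf t}(\widetilde{Q}\,\widetilde{\!W},R,\alpha)$ that is tight at one admissible point. The only structural fact I need is the elementary bound $\max\{a,b\}\ge(1-\rho)a+\rho b$, valid for every $\rho\in[0,1)$ because the weights $1-\rho$ and $\rho$ are non-negative and sum to one. Applying it to the two branches ${F\mathstrut}_{1}^{\bf t}$ and ${F\mathstrut}_{2}$ of ${F\mathstrut}^{\bf t}$ in (\ref{eqF}) converts the non-smooth maximum in (\ref{eqDef}) into exactly the smooth objective (\ref{eqFixed}), up to the linear cost term, which is the key algebraic identity driving both halves of the lemma.

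For the inequality, I would fix any feasible $QW$ (i.e. $\mathbb{E}_{Q}[f(X)]\le\alpha$) and write ${F\mathstrut}^{\bf t}(QW,\widetilde{Q}\,\widetilde{\!W},R)\ge(1-\rho){F\mathstrut}_{1}^{\bf t}(QW,\widetilde{Q}\,\widetilde{\!W})+\rho\,{F\mathstrut}_{2}(QW,R)$. Expanding the branches via (\ref{eqF1})--(\ref{eqF2}), the $D(W\|P|Q)$ terms combine, the $-\rho I(Q,W)$ and $+\rho R$ terms appear, and the divergence weight collapses to $1-\rho$, so the right-hand side equals ${F\mathstrut}^{\bf t}(\rho,\eta,QW,\widetilde{Q}\,\widetilde{\!W})-\eta\,\mathbb{E}_{Q}[f(X)]+\rho R$. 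Since $\eta\ge0$ and $\mathbb{E}_{Q}[f(X)]\le\alpha$, the cost term is bounded below by $-\eta\alpha$, and bounding ${F\mathstrut}^{\bf t}(\rho,\eta,QW,\widetilde{Q}\,\widetilde{\!W})$ below by its unconstrained minimum ${E\mathstrut}_{0}^{\bf t}(\rho,\eta,\widetilde{Q}\,\widetilde{\!W})$ yields ${F\mathstrut}^{\bf t}(QW,\widetilde{Q}\,\widetilde{\!W},R)\ge {E\mathstrut}_{0}^{\bf t}(\rho,\eta,\widetilde{Q}\,\widetilde{\!W})+\rho R-\eta\alpha$. Minimizing the left side over feasible $QW$ then gives (\ref{eqEquiv}).

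For tightness, let $Q^{*}W^{*}$ attain the minimum defining ${E\mathstrut}_{0}^{\bf t}(\rho,\eta,\widetilde{Q}\,\widetilde{\!W})$ in (\ref{eqE0}); it exists by lower semicontinuity of the divergences on the compact product simplex whenever ${E\mathstrut}_{0}^{\bf t}$ is finite. I would then set $\alpha\triangleq\mathbb{E}_{Q^{*}}[f(X)]$ and $R\triangleq I(Q^{*},W^{*})+{D\mathstrut}^{\bf t}(Q^{*}W^{*},\widetilde{Q}\,\widetilde{\!W})$, both admissible since $\alpha\ge\min_{x}f(x)$ and $R\ge0$ (the latter as $I$ and ${D\mathstrut}^{\bf t}$ are non-negative). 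With this $R$ the branches coincide, ${F\mathstrut}_{1}^{\bf t}(Q^{*}W^{*},\widetilde{Q}\,\widetilde{\!W})={F\mathstrut}_{2}(Q^{*}W^{*},R)$, so the convex-combination bound holds with equality for every $\rho$; the choice of $\alpha$ forces $\eta\,\mathbb{E}_{Q^{*}}[f(X)]=\eta\alpha$; and $Q^{*}W^{*}$ attains ${E\mathstrut}_{0}^{\bf t}$ by construction. Hence ${F\mathstrut}^{\bf t}(Q^{*}W^{*},\widetilde{Q}\,\widetilde{\!W},R)={E\mathstrut}_{0}^{\bf t}(\rho,\eta,\widetilde{Q}\,\widetilde{\!W})+\rho R-\eta\alpha$, and since $Q^{*}W^{*}$ is feasible this upper-bounds ${E\mathstrut}_{\!c}^{\bf t}(\widetilde{Q}\,\widetilde{\!W},R,\alpha)$, which together with (\ref{eqEquiv}) forces equality.

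The expansions and the checks $R\ge0$, $\alpha\ge\min_{x}f(x)$ are routine; the one place deserving care — and the main obstacle — is the existence and finiteness of $Q^{*}W^{*}$. I would lean on the remark preceding the lemma and treat the finite case as above, noting separately that if ${E\mathstrut}_{0}^{\bf t}(\rho,\eta,\widetilde{Q}\,\widetilde{\!W})=+\infty$ then (\ref{eqEquiv}) already forces ${E\mathstrut}_{\!c}^{\bf t}\equiv+\infty$ and equality is attained trivially. I would also verify the edge case $\rho=0$, where the convex combination degenerates to ${F\mathstrut}_{1}^{\bf t}$ alone: the choice $F_{1}=F_{2}$ at $Q^{*}W^{*}$ still makes the maximum equal to ${F\mathstrut}_{1}^{\bf t}$, so the single pair $(R,\alpha)$ above works uniformly across $\rho\in[0,1)$.
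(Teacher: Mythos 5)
Your proposal is correct and follows essentially the same route as the paper: your bound $\max\{a,b\}\ge(1-\rho)a+\rho b$ is exactly the paper's step $|t|^{+}\ge\rho t$ applied to the second branch, the relaxation of the cost constraint via $\eta\big[\mathbb{E}_{Q}[f(X)]-\alpha\big]\le 0$ is the same, and your choice $R=I(Q^{*},W^{*})+{D\mathstrut}^{\bf t}(Q^{*}W^{*},\widetilde{Q}\,\widetilde{\!W})$, $\alpha=\mathbb{E}_{Q^{*}}[f(X)]$ is precisely the paper's ``make the square brackets vanish'' step. The only (harmless) additions are your explicit treatment of the infinite case and the $\rho=0$ edge case, which the paper leaves implicit.
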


\bigskip

\begin{proof}
By definition (\ref{eqDef})
%(\ref{eqDCcordecexp})-(\ref{eqInequality}) we can write ${E\mathstrut}_{\!c}(R, \alpha)$ as (\ref{eqBasis}):
\begin{align}
&
\;\;\;\;\;\,\,
\min_{\substack{\\Q, \, W:\\
\mathbb{E}_{Q}[f(X)]\;\leq \; \alpha
}}
\;
\Big\{
D( W \, \| \, P \, | \, Q) \, + \, {D\mathstrut}^{\bf t}(QW, \,\widetilde{Q}\,\widetilde{\!W}) \, +
\nonumber \\
%\end{align}
%\begin{align}
&
\;\;
\big|R - I(Q, W) - {D\mathstrut}^{\bf t}(QW, \,\widetilde{Q}\,\widetilde{\!W})\big|^{+}
\Big\}
\label{eqByDef} \\
%\end{align}
%\begin{align}
&  \overset{(a)}{\geq}  \;
\min_{\substack{\\Q, \, W:\\
\mathbb{E}_{Q}[f(X)]\;\leq \; \alpha
}}
\;
\Big\{
D( W \, \| \, P \, | \, Q) \, + \, {D\mathstrut}^{\bf t}(QW, \,\widetilde{Q}\,\widetilde{\!W}) \, +
\nonumber \\
&
\rho\big[R - I(Q, W)- {D\mathstrut}^{\bf t}(QW, \,\widetilde{Q}\,\widetilde{\!W})\big]
\, + \, \eta\big[\mathbb{E}_{Q}[f(X)] - \alpha\big]\Big\},
\nonumber \\
&  \geq  \;
\;\;\;\;\;\;\,
\min_{\substack{\\Q, \, W
}}
\;\;\;\;\;\;\,
\Big\{
D( W \, \| \, P \, | \, Q) \, + \, {D\mathstrut}^{\bf t}(QW, \,\widetilde{Q}\,\widetilde{\!W}) \, +
\nonumber \\
&
\rho\big[R - I(Q, W)- {D\mathstrut}^{\bf t}(QW, \,\widetilde{Q}\,\widetilde{\!W})\big]
\, + \, \eta\big[\mathbb{E}_{Q}[f(X)] - \alpha\big]\Big\},
\label{eqConvexEnv}
\end{align}
where
($a$) holds for any $0 \leq \rho < 1$ and $\eta \geq 0$.
Using (\ref{eqFixed}) and (\ref{eqE0}),
we see that
the lower bound expression (\ref{eqConvexEnv}) is equal to the RHS of (\ref{eqEquiv}). %by (\ref{eqFixed}), (\ref{eqE0}).
Suppose (\ref{eqConvexEnv}) is finite.
Let ${Q\mathstrut}_{\rho, \, \eta}$, ${W\mathstrut}_{\!\rho, \, \eta}$ %${Q\mathstrut}_{\rho, \, \eta}$
 denote distributions $Q$, $W$, respectively, which jointly minimize (\ref{eqConvexEnv}).
% Then it can be seen from (\ref{eqFixed}), that necessarily ${Q\mathstrut}_{\rho, \, \eta} = {U\mathstrut}_{\!\rho, \, \eta}$.
Observe that for each $0 \leq \rho < 1$ and $\eta \geq 0$ we can find $R \geq 0$ and $\alpha \geq \min_{\,x} f(x)$,
such that the differences in the square brackets are zero.
In this case, ${Q\mathstrut}_{\rho, \, \eta}$ %and ${Q\mathstrut}_{\rho, \, \eta}$
will satisfy the input constraint and
there will be equality between (\ref{eqConvexEnv}) and (\ref{eqByDef}).
\end{proof}

%\bigskip

%%%%%%%%%%%%%%%%%%%%%%%%%%%%%%%%%%%%%%%%%%
%%% REMARK ABOUT CONVEXITY - NEED TO CHECK
%%%%%%%%%%%%%%%%%%%%%%%%%%%%%%%%%%%%%%%%%%
%In fact, since ${E\mathstrut}_{\!c}^{\bf t}(\widetilde{Q}\,\widetilde{\!W}, R, \alpha)$ is a convex ($\cup$) and monotonic function of $(R, \alpha)$,
%which cannot have lower supporting planes with slopes $\rho > 1$, the supremum of the RHS of (\ref{eqEquiv}) over $0 \leq \rho < 1$ and $\eta \geq 0$
%equals ${E\mathstrut}_{\!c}^{\bf t}(\widetilde{Q}\,\widetilde{\!W}, R, \alpha)$ for all $(R, \alpha)$.
%%%%%%%%%%%%%%%%%%%%%%%%%%%%%%%%%%%%%%%%%%

%Since (\ref{eqByDef}) is a monotonic function of $R$ and $\alpha$, and  in the limit of large $R$ the slope of (\ref{eqByDef}), as a function of $R$, %cannot be greater than $1$,
%we conclude that (\ref{eqConvexEnv}) is the lower convex envelope of (\ref{eqByDef}) as a function of $(R, \alpha)$.
%Observe now, that (\ref{eqByDef}) itself is a convex ($\cup$) function of $(R, \alpha)$.
%This can be checked directly using definition (\ref{eqDef}) and the fact, that $F(UW, Q, R)$
%is a convex ($\cup$) function of the pair $(UW, R)$.
%The equality in (\ref{eqEquiv}) now follows, because a convex ($\cup$) function coincides with its lower convex envelope.

\bigskip

\begin{lemma} %[]
\label{lemSol}%\newline
{\em Suppose $\widetilde{Q}\,\widetilde{\!W} \equiv \widetilde{T}\widetilde{V}$ is such that the minimum (\ref{eqE0}) is finite. If $t_{1} = t_{4} + 1$ in (\ref{eqCombination}),
then, with definitions of $a \triangleq (t_{2} + t_{4})(1-\rho)$ and $b \triangleq (t_{3} + t_{4})(1-\rho)$,
$0\leq \rho < 1$ and $\eta \geq 0$,
the unique minimizing solution of the minimum (\ref{eqE0}) can be written as}
%inner minimum from (\ref{eqEquiv})}
\begin{align}
& {Q\mathstrut}^{*}(x){W\mathstrut}^{*}(y \, | \, x) \;\; = \;\;
\frac{1}{K}\Big[{\widetilde{Q}\mathstrut}^{1-\rho}(x){\widetilde{V}\mathstrut}^{b}(x\,|\,y)P_{\eta}(x, y)\Big]^{\frac{1}{b+1-\rho}}
\nonumber \\
&
\times{\widetilde{T}\mathstrut}^{\frac{a}{a+1}}(y)
\bigg\{\sum_{\tilde{x}}\Big[{\widetilde{Q}\mathstrut}^{1-\rho}(\tilde{x}){\widetilde{V}\mathstrut}^{b}(\tilde{x}\,|\,y)P_{\eta}(\tilde{x}, y)\Big]^{\frac{1}{b+1-\rho}}\bigg\}^{\frac{b-a-\rho}{a+1}},
\label{eqSol}
\end{align}
{\em where $\,P_{\eta}(x, y) \triangleq
e^{-\eta f(x)} P(y\,|\,x)$ and $K$ is a normalization constant, resulting in}
\begin{align}
& {E\mathstrut}_{0}^{\bf t}(\rho, \eta, \widetilde{Q}\,\widetilde{\!W}) \;\; = \;\;
-(a+1)\log\sum_{y}{\widetilde{T}\mathstrut}^{\frac{a}{a+1}}(y)\times
\nonumber \\
& \bigg\{\sum_{x}\Big[{\widetilde{Q}\mathstrut}^{1-\rho}(x){\widetilde{V}\mathstrut}^{b}(x\,|\,y)P_{\eta}(x, y)\Big]^{\frac{1}{b+1-\rho}}\bigg\}^{\frac{b+1-\rho}{a+1}}.
\label{eqE0Explicit}
\end{align}
{\em If $t_{3} = t_{2} + \frac{\rho}{1-\rho}$ in (\ref{eqCombination}),
then, with $c \triangleq (t_{1} + t_{2})(1-\rho)$ and $a$ as defined above,
$0 < \rho < 1$ and $\eta \geq 0$,
the unique minimizing solution of the minimum (\ref{eqE0}) can be written as}
\begin{align}
& {Q\mathstrut}^{*}(x){W\mathstrut}^{*}(y \, | \, x) \;\; = \;\;
\frac{1}{K}\Big[{\,\widetilde{\!W}\mathstrut}^{a}(y\,|\,x){\widetilde{V}\mathstrut}^{\rho}(x\,|\,y)P_{\eta}(x, y)\Big]^{\frac{1}{a+1}}
\nonumber \\
&
\times{\widetilde{Q}\mathstrut}^{\frac{c}{c+\rho}}(x)
\bigg\{\sum_{\tilde{y}}\Big[{\,\widetilde{\!W}\mathstrut}^{a}(\tilde{y}\,|\,x){\widetilde{V}\mathstrut}^{\rho}(x\,|\,\tilde{y})P_{\eta}(x, \tilde{y})\Big]^{\frac{1}{a+1}}\bigg\}^{\frac{a+1-c-\rho}{c+\rho}}\!\!,
\label{eqSol2}
\end{align}
{\em where $P_{\eta}(x, y)$ is defined as above and $K$ is a normalization constant, resulting in}
\begin{align}
& {E\mathstrut}_{0}^{\bf t}(\rho, \eta, \widetilde{Q}\,\widetilde{\!W}) \;\; = \;\;
-(c+\rho)\log\sum_{x}{\widetilde{Q}\mathstrut}^{\frac{c}{c+\rho}}(x)\times
\nonumber \\
& \bigg\{\sum_{y}\Big[{\,\widetilde{\!W}\mathstrut}^{a}(y\,|\,x){\widetilde{V}\mathstrut}^{\rho}(x\,|\,y)P_{\eta}(x, y)\Big]^{\frac{1}{a+1}}\bigg\}^{\frac{a+1}{c+\rho}}.
\label{eqE0Explicit2}
\end{align}
%{\em where $\,P_{\eta}(x, y) \triangleq
%e^{-\eta f(x)} P(y\,|\,x)$ and $K$ is a constant.}
\end{lemma}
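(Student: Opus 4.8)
The plan is to treat the minimization (\ref{eqE0}) as a convex program over the single joint distribution $p = QW = TV$ on $\mathcal{X}\times\mathcal{Y}$ (with only the normalization constraint $\sum_{x,y}p(x,y)=1$) and to characterize its unique minimizer through stationarity. First I would record the objective (\ref{eqFixed}) in the additive form $F^{\bf t} = C_p\sum_{x,y}p\log p + C_Q\sum_x Q\log Q + C_T\sum_y T\log T + \sum_{x,y}p\,L(x,y)$, where $Q,T$ are the marginals of $p$, the coefficients $C_p,C_Q,C_T$ are explicit affine functions of $(\rho,t_1,\dots,t_4)$, and $L$ collects the reference and channel logarithms after eliminating $\widetilde W$ via the identity $\widetilde W = \widetilde T\widetilde V/\widetilde Q$ (legitimate since $\widetilde Q\widetilde W\equiv\widetilde T\widetilde V$). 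The one nonroutine point here is convexity: although $C_Q$ may be negative, the block $D(W\|P|Q)-\rho I(Q,W)$ is convex in $p$ for $0\le\rho<1$, since a second-order perturbation $\delta p$ produces $(1-\rho)\sum_{x,y}(\delta p)^2/p - (1-\rho)\sum_x(\sum_y\delta p)^2/Q + \rho\sum_y(\sum_x\delta p)^2/T \ge 0$ by Cauchy--Schwarz; the remaining divergence terms are jointly convex, so $F^{\bf t}$ is convex and the minimizer is unique.

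Second, I would write the Lagrangian for the normalization constraint and set $\partial F^{\bf t}/\partial p(x,y)$ equal to a constant, keeping in mind that $Q(x)=\sum_y p$ and $T(y)=\sum_x p$ also depend on $p$. This produces the self-consistent fixed-point form $p^*(x,y)\propto Q^*(x)^{-C_Q/C_p}T^*(y)^{-C_T/C_p}\,e^{-L(x,y)/C_p}$, in which the reference factors $\widetilde Q,\widetilde V,\widetilde T$ and $P_\eta$ enter through $e^{-L/C_p}$ while the true marginals $Q^*,T^*$ still appear implicitly on the right.

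The main obstacle is precisely to solve this coupled pair of marginal equations in closed form. Summing the stationarity relation over $x$ and over $y$ yields two coupled equations for $Q^*$ and $T^*$, and in general neither can be eliminated. The role of the hypothesis $t_1 = t_4+1$ is exactly to make the back-substitution telescope: it forces the exponents to line up so that the $x$-summation collapses into the single inner factor $S(y)=\sum_x[\widetilde Q^{1-\rho}\widetilde V^{b}P_\eta]^{1/(b+1-\rho)}$, after which $T^*$ is determined explicitly and the joint is reconstructed as (\ref{eqSol}). I would carry this out in two nested stages — an inner per-$y$ minimization over the conditional $V(\cdot\,|y)$ (whence the inner $\sum_x$ and the exponent $1/(b+1-\rho)$), followed by the outer minimization over the marginal $T$ (whence the outer $\sum_y$ and the weight $\widetilde T^{a/(a+1)}$) — with the auxiliary marginal $Q$ handled by its own multiplier so that the inner stage decouples across $y$. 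Finally I would fix the normalization constant $K$ and substitute $p^*$ back into (\ref{eqFixed}); because at the optimum the bracketed reference and channel terms combine into $-(a+1)\log K$, this delivers (\ref{eqE0Explicit}).

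For the second part the computation is the mirror image organized around $(Q,W)$ instead of $(T,V)$: the inner stage is a per-$x$ minimization over $W(\cdot\,|x)$, giving the inner $\sum_y$ and the exponent $1/(a+1)$ in (\ref{eqSol2}), and the outer stage is over the $X$-marginal $Q$. The asymmetry introduced by the channel term $D(W\|P|Q)$ — tied to the $X\!\to\!Y$ direction with no $Y\!\to\!X$ counterpart — is what replaces the additive ``$+1$'' of the first condition by ``$+\rho/(1-\rho)$,'' so the required hypothesis becomes $t_3 = t_2+\rho/(1-\rho)$ and $\rho>0$ is needed for it to be meaningful; the telescoping and the normalization step then proceed as before to give (\ref{eqE0Explicit2}). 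As an alternative to the stationarity argument, one can instead take $Q^*W^*$ from (\ref{eqSol}) as an ansatz and verify an identity $F^{\bf t}-E_0^{\bf t} = (\text{a nonnegative combination of KL divergences between }QW\text{ and }Q^*W^*)$ that vanishes iff $QW=Q^*W^*$; this route sidesteps convexity and proves minimality and uniqueness simultaneously, at the cost of having to discover the exact divergence decomposition, which is again where the condition on $\bf t$ is consumed.
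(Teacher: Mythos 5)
Your overall strategy --- rewrite $F^{\bf t}$ in the additive form $C_p\sum_{x,y}p\log p + C_Q\sum_x Q\log Q + C_T\sum_y T\log T + \sum_{x,y}pL$, establish convexity via the second-order perturbation (your Cauchy--Schwarz argument for $D(W\|P|Q)-\rho I(Q,W)$ is correct), and solve the stationarity condition $p^*\propto Q^{*-C_Q/C_p}T^{*-C_T/C_p}e^{-L/C_p}$ by a nested per-$y$ (resp.\ per-$x$) minimization --- is sound, and is a legitimate route to this lemma. The paper itself gives no argument to compare against (its proof is a one-line citation of \cite[Lemma~3]{TridenskiZamir18Trans}), so a self-contained derivation of this kind is exactly what is called for.

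The gap is that you never execute the one step that carries all of the content: computing $C_p,C_Q,C_T$ and checking that the stated hypotheses on ${\bf t}$ actually annihilate the implicit marginal and reproduce the exponents of (\ref{eqSol}). If you collect coefficients from (\ref{eqFixed}) and (\ref{eqCombination}) you get $C_p=(1-\rho)(1+t_2+t_4)$, $C_Q=(1-\rho)(t_1-t_2-1)$ and $C_T=\rho+(1-\rho)(t_3-t_4)$, so the $x$-summation decouples iff $t_1=t_2+1$ (not $t_1=t_4+1$), and the $y$-summation decouples iff $t_4=t_3+\tfrac{\rho}{1-\rho}$ (not $t_3=t_2+\tfrac{\rho}{1-\rho}$); moreover, after eliminating $\widetilde W=\widetilde T\widetilde V/\widetilde Q$ the surviving reference exponents are $(t_2+t_4)(1-\rho)$ on $\widetilde V$ and $(t_2+t_3)(1-\rho)$ on $\widetilde T$, which do not literally match the lemma's definitions of $b$ and $a$. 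Carrying your plan through therefore does \emph{not} land on the statement as written: e.g.\ for ${\bf t}=(1,0,1,0)$, $\rho=\eta=0$ (which satisfies $t_1=t_4+1$ with $a=0$, $b=1$) the true minimizer of $D(W\|P|Q)+D(Q\|\widetilde Q)+D(T\|\widetilde T)$ is $p^*\propto \widetilde Q(x)P(y|x)\widetilde T^{1/2}(y)\big[\sum_{x'}\widetilde Q(x')P(y|x')\big]^{-1/2}$, whereas (\ref{eqSol}) with $(a,b)=(0,1)$ gives $p^*\propto\big[\widetilde Q\widetilde V P\big]^{1/2}\sum_{x'}\big[\widetilde Q\widetilde V P\big]^{1/2}$, and (\ref{eqE0Explicit}) can return a strictly positive value in cases where the minimum is $0$. (The two versions coincide under the relabeling $t_2\!\to\!t_4\!\to\!t_3\!\to\!t_2$, and it is the relabeled version that is consistent with the paper's own claimed special cases $(a,b)=(0,0)$ and $(a,b)=(0,\rho)$, so this is most plausibly an indexing slip in the statement --- but your proof, by asserting that ``the role of the hypothesis $t_1=t_4+1$ is exactly to make the back-substitution telescope'' without computing $C_Q$, silently assumes away precisely the point that needs verification.) Two smaller issues: uniqueness needs strict convexity, which you should justify by noting that once $C_Q=0$ the Hessian quadratic form is $C_p\sum(\delta p)^2/p+C_T\sum(\delta T)^2/T$ with $C_p>0$; and in the outer stage the exponent $(C_p+C_T)^{-1}$ must be shown to equal $1/(a+1)$ (it does, via $C_p+C_T=1+(1-\rho)(t_2+t_3)$), which is another identity your sketch leaves unchecked.
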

\begin{proof} Similarly to \cite[Lemma~3]{TridenskiZamir18Trans}.
\end{proof}

\bigskip

An iterative minimization procedure at a fixed gradient $(\rho, \eta)$,
$0 < \rho < 1$, $\eta \geq 0$,
%uses the explicit computation of (\ref{eqSol}) and
is given by
\begin{align}
&
\begin{array}{l}
\displaystyle
\;\;\;\;\;\;\;\,
{Q\mathstrut}_{\ell}{W\mathstrut}_{\!\ell} \;\; = \;\;
\underset{\substack{\\Q, \, W
}}{\arg\min}
%\arg\min_{\substack{\\U(x), \, W(y\,|\,x)
%}}
\;
{F\mathstrut}^{\bf t}(\rho, \,\eta, \,QW, \,{\widetilde{Q}\mathstrut}_{\ell}{\,\widetilde{\!W}\mathstrut}_{\!\ell}), \\
%F(\rho, \eta, UW, {Q\mathstrut}_{\ell}), \\ %\;\;\; \ell \, = \, 0, 1, 2, ...
%\label{eqStep1} \\
\displaystyle
{\widetilde{Q}\mathstrut}_{\ell\,+\,1}{\,\widetilde{\!W}\mathstrut}_{\!\ell\,+\,1} \; = \;\;\;
{Q\mathstrut}_{\ell}{W\mathstrut}_{\!\ell},
\end{array}
\label{eqMinProcGradient} \\
&
\;\;\;\;\;\;\;\;\;\;\;\;\;\;\;\;\;\;\;\;\;\;\;\;\;\;\;\;\;\;\;\;\;\;\;\;\;\;\;\;\;\;\;\;
\ell \, = \, 0, 1, 2, ... \,.
\nonumber
\end{align}
We assume that the initial distribution ${\widetilde{Q}\mathstrut}_{0}{\,\widetilde{\!W}\mathstrut}_{\!0}$ in (\ref{eqMinProcGradient}) is chosen such that
the set $\big\{QW:  {F\mathstrut}_{1}^{\bf t}(QW, \,{\widetilde{Q}\mathstrut}_{0}{\,\widetilde{\!W}\mathstrut}_{\!0}) < +\infty \big\}$
is non-empty, which guarantees ${F\mathstrut}^{\bf t}(\rho, \, \eta, \,{Q\mathstrut}_{0}{W\mathstrut}_{\!0}, \,{\widetilde{Q}\mathstrut}_{0}{\,\widetilde{\!W}\mathstrut}_{\!0}) = {E\mathstrut}_{0}^{\bf t}(\rho, \eta, {\widetilde{Q}\mathstrut}_{0}{\,\widetilde{\!W}\mathstrut}_{\!0}) < +\infty$.
By (\ref{eqFixed}) it is clear that (\ref{eqMinProcGradient}) produces a monotonically non-increasing sequence ${E\mathstrut}_{0}^{\bf t}(\rho, \eta, {\widetilde{Q}\mathstrut}_{\ell}{\,\widetilde{\!W}\mathstrut}_{\!\ell})$, $\ell = 0, 1, 2, ... \,$.
Depending on the choice of the non-negative parameters $(t_{1}, t_{2}, t_{3}, t_{4})$ in (\ref{eqCombination}),
the update of ${Q\mathstrut}_{\ell}\,{W\mathstrut}_{\!\ell}$ in (\ref{eqMinProcGradient}) can be done according to the expression (\ref{eqSol})
with any $a\geq 0$ and $b \geq 0$, or according to (\ref{eqSol2}) with any $a\geq 0$ and $c \geq 0$,
with $\widetilde{Q}$, $\widetilde{V}$, $\widetilde{T}$, $\,\widetilde{\!W}$ replaced by ${\widetilde{Q}\mathstrut}_{\ell}$,
${\widetilde{V}\mathstrut}_{\ell}$, ${\widetilde{T}\mathstrut}_{\ell}$, ${\,\widetilde{\!W}\mathstrut}_{\!\ell}$,
correspondingly.
The choice of $a = b = 0$ in (\ref{eqSol}) gives the fixed-slope counterpart of the algorithm in \cite{TridenskiZamir18Trans},
analysed in \cite{TridenskiZamir18Extended}. The choice $(a, c) = (0, 1)$ in (\ref{eqSol2})
gives the fixed-slope counterpart of the algorithm in \cite{TridenskiZamir18}.
The choice $(a, b) = (0, \rho)$ in (\ref{eqSol}), or, alternatively, $(a, c) = (0, 1-\rho)$ in (\ref{eqSol2})
gives the algorithm in \cite{OohamaJitsumatsu19}, \cite{OohamaJitsumatsu15}.
The main result of the section is given by the following theorem:

\bigskip

\begin{thm}%[]
\label{thmConvergeGradient}%\newline
{\em Let ${\big\{{Q\mathstrut}_{\ell}{W\mathstrut}_{\!\ell}\big\}\mathstrut}_{\ell \, = \, 0}^{+\infty}$ be a sequence of iterative solutions produced by (\ref{eqMinProcGradient}). %\newline
%If $\,{E\mathstrut}_{\!c}({Q\mathstrut}_{0}, R, \alpha) < +\infty$, %\newline
Then}
\begin{equation} \label{eqConvergeGrad}
%F(\rho, \,\eta, \,{U\mathstrut}_{\!\ell}\,{W\mathstrut}_{\!\ell}, \,{Q\mathstrut}_{\ell})
{E\mathstrut}_{0}^{\bf t}(\rho, \eta, {\widetilde{Q}\mathstrut}_{\ell}{\,\widetilde{\!W}\mathstrut}_{\!\ell})
\; \overset{\ell \, \rightarrow\,\infty}{\searrow} \;
\min_{\substack{\\\widetilde{Q}, \, \,\widetilde{\!W}:\\ {D\mathstrut}^{\bf t}(\widetilde{Q}\,\widetilde{\!W}\!, \,{\widetilde{Q}\mathstrut}_{0}{\,\widetilde{\!W}\mathstrut}_{\!0})
\,<\, \infty}}
%F(\rho, \eta, UW, U),
{E\mathstrut}_{0}^{\bf t}(\rho, \eta, \widetilde{Q}\,\widetilde{\!W}),
%\;\;
%\min_{\substack{\\U(x), \, W(y\,|\,x):\\
%\mathbb{E}\,[f(X)]\;\leq \; \alpha
%}}
%\;
%F(UW, Q).
\end{equation}
{\em where ${E\mathstrut}_{0}^{\bf t}(\rho, \eta, \widetilde{Q}\,\widetilde{\!W})$ is defined in (\ref{eqE0}) and ${D\mathstrut}^{\bf t}(\cdot\,,\cdot)$ in (\ref{eqCombination}).}
\end{thm}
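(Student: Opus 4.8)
The plan is to follow the two-part architecture of the proof of Theorem~\ref{thmConvergence}: first establish a five-points-property lemma for the fixed-gradient objective ${F\mathstrut}^{\bf t}(\rho, \eta, QW, \widetilde{Q}\,\widetilde{\!W})$ of (\ref{eqFixed}), and then run the Csisz\'ar--Tusn\'ady telescoping argument verbatim. The fixed-gradient setting is in fact \emph{simpler} than the fixed-rate one, because ${F\mathstrut}^{\bf t}(\rho, \eta, \cdot\,, \cdot)$ is a single convex function rather than a maximum of two. Consequently the lemma needs only one case, and the three-way case analysis of Lemma~\ref{lem5PP} (the split according to which branch of the maximum is active at ${Q\mathstrut}_{0}{W\mathstrut}_{\!0}$) disappears entirely.

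The heart of the matter is a three-points / Pythagorean inequality for ${F\mathstrut}^{\bf t}(\rho, \eta, \cdot\,, {\widetilde{Q}\mathstrut}_{0}{\,\widetilde{\!W}\mathstrut}_{\!0})$. First I would verify that, as a function of the joint distribution $QW$, the term $D(W \, \| \, P \, | \, Q) - \rho\,I(Q, W) + \eta\,\mathbb{E}_{Q}[f(X)]$ is convex $(\cup)$: writing conditional and output entropies as relative entropies between linear functions of the joint distribution shows that $D(W \, \| \, P \, | \, Q) - \rho\,I(Q, W) = -(1-\rho)H(Y|X) - \rho\,H(Y) - \sum_{x,\,y}Q(x)W(y\,|\,x)\log P(y\,|\,x)$, which is convex because $0 \le \rho < 1$, while $\eta\,\mathbb{E}_{Q}[f(X)]$ is linear and $(1-\rho){D\mathstrut}^{\bf t}(QW, \,{\widetilde{Q}\mathstrut}_{0}{\,\widetilde{\!W}\mathstrut}_{\!0})$ is jointly convex. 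By Lemma~\ref{lemSol} the minimizer ${Q\mathstrut}_{0}{W\mathstrut}_{\!0}$ is unique, and since the minimization in (\ref{eqMinProcGradient}) is now \emph{unconstrained}, it is a stationary point. I would then form the convex combination (\ref{eqConvCombin}) with an arbitrary competitor $\hat{Q}\hat{W}$, set ${f}(\lambda) \triangleq {F\mathstrut}^{\bf t}(\rho, \eta, {Q\mathstrut}^{(\lambda)}{W\mathstrut}^{(\lambda)}, \,{\widetilde{Q}\mathstrut}_{0}{\,\widetilde{\!W}\mathstrut}_{\!0})$, and use $\lim_{\lambda \to 0}{f}'(\lambda) \ge 0$. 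Differentiating exactly as in the derivation of (\ref{eqPyth}) would yield
\begin{displaymath}
{F\mathstrut}^{\bf t}(\rho, \eta, {Q\mathstrut}_{0}{W\mathstrut}_{\!0}, {\widetilde{Q}\mathstrut}_{0}{\,\widetilde{\!W}\mathstrut}_{\!0}) + (1-\rho){D\mathstrut}^{\bf t}(\hat{Q}\hat{W}, {Q\mathstrut}_{0}{W\mathstrut}_{\!0}) \;\le\; {F\mathstrut}^{\bf t}(\rho, \eta, \hat{Q}\hat{W}, {\widetilde{Q}\mathstrut}_{0}{\,\widetilde{\!W}\mathstrut}_{\!0}),
\end{displaymath}
after discarding the additional non-negative Bregman term produced by the convex part (the analogue of the $D(\hat{W}\,\|\,{W\mathstrut}_{\!0}\,|\,\hat{Q})$ term dropped from (\ref{eqPyth})).

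Next I would combine this with the decomposition identity analogous to (\ref{eqIdent}), namely ${F\mathstrut}^{\bf t}(\rho, \eta, \hat{Q}\hat{W}, {\widetilde{Q}\mathstrut}_{0}{\,\widetilde{\!W}\mathstrut}_{\!0}) = {F\mathstrut}^{\bf t}(\rho, \eta, \hat{Q}\hat{W}, \hat{Q}\hat{W}) + (1-\rho){D\mathstrut}^{\bf t}(\hat{Q}\hat{W}, {\widetilde{Q}\mathstrut}_{0}{\,\widetilde{\!W}\mathstrut}_{\!0})$, which holds because ${\widetilde{Q}\mathstrut}_{0}{\,\widetilde{\!W}\mathstrut}_{\!0}$ enters (\ref{eqFixed}) only through the $(1-\rho){D\mathstrut}^{\bf t}$ term. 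Using ${\widetilde{Q}\mathstrut}_{1}{\,\widetilde{\!W}\mathstrut}_{\!1} = {Q\mathstrut}_{0}{W\mathstrut}_{\!0}$, this produces the five-points bound
\begin{displaymath}
{F\mathstrut}^{\bf t}(\rho, \eta, {Q\mathstrut}_{0}{W\mathstrut}_{\!0}, {\widetilde{Q}\mathstrut}_{0}{\,\widetilde{\!W}\mathstrut}_{\!0}) \le {F\mathstrut}^{\bf t}(\rho, \eta, \hat{Q}\hat{W}, \hat{Q}\hat{W}) + (1-\rho)\big[{D\mathstrut}^{\bf t}(\hat{Q}\hat{W}, {\widetilde{Q}\mathstrut}_{0}{\,\widetilde{\!W}\mathstrut}_{\!0}) - {D\mathstrut}^{\bf t}(\hat{Q}\hat{W}, {\widetilde{Q}\mathstrut}_{1}{\,\widetilde{\!W}\mathstrut}_{\!1})\big],
\end{displaymath}
the exact counterpart of (\ref{eqBound1}), here with a clean difference rather than a positive part since no maximum is involved.

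Finally, the convergence argument is a transcription of the proof of Theorem~\ref{thmConvergence}. I would rewrite the right-hand side of (\ref{eqConvergeGrad}), via the decomposition identity, as $\min\{{F\mathstrut}^{\bf t}(\rho, \eta, QW, QW) : {D\mathstrut}^{\bf t}(QW, {\widetilde{Q}\mathstrut}_{0}{\,\widetilde{\!W}\mathstrut}_{\!0}) < \infty\}$ (the analogue of (\ref{eqAchieve})), take $\hat{Q}\hat{W}$ to be a minimizer, and apply the five-points bound at each iteration. Either some iterate already satisfies ${F\mathstrut}^{\bf t}(\rho, \eta, {Q\mathstrut}_{\ell}{W\mathstrut}_{\!\ell}, {\widetilde{Q}\mathstrut}_{\ell}{\,\widetilde{\!W}\mathstrut}_{\!\ell}) \le {F\mathstrut}^{\bf t}(\rho, \eta, \hat{Q}\hat{W}, \hat{Q}\hat{W})$, whence monotonicity of ${E\mathstrut}_{0}^{\bf t}(\rho, \eta, {\widetilde{Q}\mathstrut}_{\ell}{\,\widetilde{\!W}\mathstrut}_{\!\ell})$ closes the argument; or the bound holds with the telescoping difference for all $\ell$, and since the ${D\mathstrut}^{\bf t}$ terms are non-negative the $\liminf$ of these differences is $\le 0$ (as in \cite[Lemma~1]{CsiszarTusnady84}), forcing convergence to the target. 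The main obstacle is the three-points step: establishing convexity of $D(W \, \| \, P \, | \, Q) - \rho\,I(Q, W)$ in the joint variable $QW$ (where $\rho < 1$ is essential) and carrying out the differentiation cleanly for the $-\rho\,I(Q,W)$ term, whose dependence on the induced marginal $T(y) = \sum_{x}Q(x)W(y\,|\,x)$ couples the two arguments; once that Pythagorean inequality is in hand, everything else mirrors Section~\ref{FixedRate}.
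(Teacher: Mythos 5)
Your proposal is correct and follows essentially the same route as the paper: the paper's Lemma~\ref{lem5PPFixed} is exactly your five-points bound (its $F_1^{\bf t}$ differences reduce to your ${D\mathstrut}^{\bf t}$ differences since the $D(\hat{W}\,\|\,P\,|\,\hat{Q})$ term cancels), obtained by the same directional-derivative/three-points argument in which the discarded Bregman term is computed explicitly as $\rho D(\hat{T}\,\|\,{T\mathstrut}_{0}) + (1-\rho)D(\hat{W}\,\|\,{W\mathstrut}_{\!0}\,|\,\hat{Q})$, and the telescoping conclusion is identical. The only inessential deviation is your appeal to Lemma~\ref{lemSol} for uniqueness of the minimizer, which holds only for special choices of ${\bf t}$ and is not needed anywhere in the argument.
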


\bigskip

In order to prove Theorem~\ref{thmConvergeGradient}, we use the following lemma:

\bigskip

\begin{lemma} %[]
\label{lem5PPFixed}%\newline
{\em Let $\hat{Q}\hat{W}$ be such that
${F\mathstrut}_{1}^{\bf t}(\hat{Q}\hat{W}, \,{\widetilde{Q}\mathstrut}_{0}{\,\widetilde{\!W}\mathstrut}_{\!0}) < +\infty$.
%$\,\text{\em supp}(\hat{U}\hat{W}) \subseteq \text{\em supp}({Q\mathstrut}_{0}P)$.
%achieve the minimum in (\ref{eqAchieve}). %and suppose $\,{E\mathstrut}_{\!c}({Q\mathstrut}_{0}, R, \alpha) < +\infty$.
%and let $\text{\em supp}({Q\mathstrut}_{1}) = \text{\em supp}({Q\mathstrut}_{0})$.}\newline
Then} %$\text{\em supp}(\hat{U}) \subseteq \text{\em supp}({Q\mathstrut}_{1})$ and}
\begin{align}
& {F\mathstrut}^{\bf t}(\rho, \,\eta, \,{Q\mathstrut}_{0}{W\mathstrut}_{\!0}, \,{\widetilde{Q}\mathstrut}_{0}{\,\widetilde{\!W}\mathstrut}_{\!0})
\;\; \leq \;\;
{F\mathstrut}^{\bf t}(\rho, \eta, \hat{Q}\hat{W}, \hat{Q}\hat{W})
\nonumber \\
&
\;\;\;
 + \, %F(\rho, \eta, \hat{U}\hat{W}, {Q\mathstrut}_{0})
(1 - \rho)\Big[{F\mathstrut}_{1}^{\bf t}(\hat{Q}\hat{W}, \,{\widetilde{Q}\mathstrut}_{0}{\,\widetilde{\!W}\mathstrut}_{\!0})
\, - \, %F(\rho, \eta, \hat{U}\hat{W}, {Q\mathstrut}_{1}).
{F\mathstrut}_{1}^{\bf t}(\hat{Q}\hat{W}, \,{\widetilde{Q}\mathstrut}_{1}{\,\widetilde{\!W}\mathstrut}_{\!1})\Big].
\label{eqBoundFixed}
\end{align}
\end{lemma}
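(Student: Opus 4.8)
The plan is to follow the proof of Lemma~\ref{lem5PP}, but to exploit the fact that in the fixed-gradient setting the objective ${F\mathstrut}^{\bf t}(\rho,\eta,QW,{\widetilde{Q}\mathstrut}_{0}{\,\widetilde{\!W}\mathstrut}_{\!0})$ from (\ref{eqFixed}) is a \emph{single} convex function of $QW$ rather than the pointwise maximum of two. Consequently no case split (as in Lemma~\ref{lem5PP}) is required, and one application of the ``three points property'' yields (\ref{eqBoundFixed}) directly.

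First I would verify that ${F\mathstrut}^{\bf t}(\rho,\eta,QW,{\widetilde{Q}\mathstrut}_{0}{\,\widetilde{\!W}\mathstrut}_{\!0})$, as a function of the joint distribution $QW$, is convex ($\cup$) and continuous on the convex set ${\cal S}\triangleq\{QW:{F\mathstrut}_{1}^{\bf t}(QW,{\widetilde{Q}\mathstrut}_{0}{\,\widetilde{\!W}\mathstrut}_{\!0})<+\infty\}$, which contains both ${Q\mathstrut}_{0}{W\mathstrut}_{\!0}$ and $\hat{Q}\hat{W}$. The only nonroutine ingredient is the combination $D(W\|P\,|\,Q)-\rho I(Q,W)$; rewriting it as $-(1-\rho)H(Y\,|\,X)-\rho H(Y)-\sum_{x,y}Q(x)W(y\,|\,x)\log P(y\,|\,x)$ exhibits it as convex in $QW$, since $0\le\rho<1$ makes both $-(1-\rho)H(Y\,|\,X)$ and $-\rho H(Y)$ convex. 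The term $(1-\rho){D\mathstrut}^{\bf t}(QW,{\widetilde{Q}\mathstrut}_{0}{\,\widetilde{\!W}\mathstrut}_{\!0})$ is convex because $1-\rho>0$ and each of the four divergences in (\ref{eqCombination}) is convex in $QW$, while $\eta\,\mathbb{E}_{Q}[f(X)]$ is linear.

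Next, exactly as in (\ref{eqConvCombin}), I would set ${Q\mathstrut}^{(\lambda)}{W\mathstrut}^{(\lambda)}\triangleq\lambda\hat{Q}\hat{W}+(1-\lambda){Q\mathstrut}_{0}{W\mathstrut}_{\!0}\in{\cal S}$ and consider $\phi(\lambda)\triangleq{F\mathstrut}^{\bf t}(\rho,\eta,{Q\mathstrut}^{(\lambda)}{W\mathstrut}^{(\lambda)},{\widetilde{Q}\mathstrut}_{0}{\,\widetilde{\!W}\mathstrut}_{\!0})$, which is convex and differentiable on $(0,1)$. Since ${Q\mathstrut}_{0}{W\mathstrut}_{\!0}$ minimizes ${F\mathstrut}^{\bf t}(\rho,\eta,\cdot\,,{\widetilde{Q}\mathstrut}_{0}{\,\widetilde{\!W}\mathstrut}_{\!0})$ by (\ref{eqMinProcGradient}), $\phi$ cannot decrease at $\lambda=0$, so $\lim_{\lambda\to0}d\phi/d\lambda\ge0$. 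Differentiating $\phi$ as in the Pythagorean derivation leading to (\ref{eqPyth}) --- now additionally tracking the $-\rho I(Q,W)$ term, whose output-marginal part contributes a $\rho\,D(\hat{T}\,\|\,{T\mathstrut}_{0})$ remainder and whose conditional part combines with that of $D(W\|P\,|\,Q)$ to give $(1-\rho)D(\hat{W}\,\|\,{W\mathstrut}_{\!0}\,|\,\hat{Q})$, with the $(1-\rho)$ weight propagating onto ${D\mathstrut}^{\bf t}$ --- I would obtain
\begin{align}
&{F\mathstrut}^{\bf t}(\rho,\eta,{Q\mathstrut}_{0}{W\mathstrut}_{\!0},{\widetilde{Q}\mathstrut}_{0}{\,\widetilde{\!W}\mathstrut}_{\!0})+(1-\rho)D(\hat{W}\,\|\,{W\mathstrut}_{\!0}\,|\,\hat{Q})\nonumber\\
&+\rho\,D(\hat{T}\,\|\,{T\mathstrut}_{0})+(1-\rho){D\mathstrut}^{\bf t}(\hat{Q}\hat{W},{Q\mathstrut}_{0}{W\mathstrut}_{\!0})\nonumber\\
&\le\;{F\mathstrut}^{\bf t}(\rho,\eta,\hat{Q}\hat{W},{\widetilde{Q}\mathstrut}_{0}{\,\widetilde{\!W}\mathstrut}_{\!0}),\nonumber
\end{align}
where $\hat{T}$ and ${T\mathstrut}_{0}$ denote the ${\cal Y}$-marginals of $\hat{Q}\hat{W}$ and ${Q\mathstrut}_{0}{W\mathstrut}_{\!0}$, respectively, all terms being finite because ${F\mathstrut}_{1}^{\bf t}(\hat{Q}\hat{W},{\widetilde{Q}\mathstrut}_{0}{\,\widetilde{\!W}\mathstrut}_{\!0})<+\infty$.

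Finally, I would drop the two nonnegative divergence terms $(1-\rho)D(\hat{W}\,\|\,{W\mathstrut}_{\!0}\,|\,\hat{Q})$ and $\rho\,D(\hat{T}\,\|\,{T\mathstrut}_{0})$, and then apply the additive identity ${F\mathstrut}^{\bf t}(\rho,\eta,\hat{Q}\hat{W},{\widetilde{Q}\mathstrut}_{0}{\,\widetilde{\!W}\mathstrut}_{\!0})={F\mathstrut}^{\bf t}(\rho,\eta,\hat{Q}\hat{W},\hat{Q}\hat{W})+(1-\rho){D\mathstrut}^{\bf t}(\hat{Q}\hat{W},{\widetilde{Q}\mathstrut}_{0}{\,\widetilde{\!W}\mathstrut}_{\!0})$ (the analog of (\ref{eqIdent})) together with ${Q\mathstrut}_{0}{W\mathstrut}_{\!0}={\widetilde{Q}\mathstrut}_{1}{\,\widetilde{\!W}\mathstrut}_{\!1}$. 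Because the term $D(\hat{W}\|P|\hat{Q})$ cancels in ${F\mathstrut}_{1}^{\bf t}(\hat{Q}\hat{W},{\widetilde{Q}\mathstrut}_{0}{\,\widetilde{\!W}\mathstrut}_{\!0})-{F\mathstrut}_{1}^{\bf t}(\hat{Q}\hat{W},{\widetilde{Q}\mathstrut}_{1}{\,\widetilde{\!W}\mathstrut}_{\!1})={D\mathstrut}^{\bf t}(\hat{Q}\hat{W},{\widetilde{Q}\mathstrut}_{0}{\,\widetilde{\!W}\mathstrut}_{\!0})-{D\mathstrut}^{\bf t}(\hat{Q}\hat{W},{\widetilde{Q}\mathstrut}_{1}{\,\widetilde{\!W}\mathstrut}_{\!1})$, the resulting bound is precisely (\ref{eqBoundFixed}). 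The main obstacle is the differentiation step: one must verify that the extra $-\rho I(Q,W)$ term, absent from Lemma~\ref{lem5PP}, produces exactly the nonnegative remainder $\rho\,D(\hat{T}\,\|\,{T\mathstrut}_{0})$ and that the factor $(1-\rho)$ is carried correctly onto ${D\mathstrut}^{\bf t}(\hat{Q}\hat{W},{Q\mathstrut}_{0}{W\mathstrut}_{\!0})$; the remaining manipulations are the bookkeeping already rehearsed in Lemma~\ref{lem5PP}.
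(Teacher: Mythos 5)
Your proposal is correct and takes essentially the same route as the paper's proof: the same convex combination ${Q\mathstrut}^{(\lambda)}{W\mathstrut}^{(\lambda)}$, the same first-order optimality condition at $\lambda=0$ yielding the ``three points'' inequality with remainders $\rho\,D(\hat{T}\,\|\,{T\mathstrut}_{0})$ and $(1-\rho)D(\hat{W}\,\|\,{W\mathstrut}_{\!0}\,|\,\hat{Q})$, and the same final step of dropping these non-negative terms and invoking the additive identity together with ${Q\mathstrut}_{0}{W\mathstrut}_{\!0}={\widetilde{Q}\mathstrut}_{1}{\,\widetilde{\!W}\mathstrut}_{\!1}$. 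Your observation that no case split is needed (unlike Lemma~\ref{lem5PP}) matches the paper, and your explicit convexity check and bookkeeping of the $-\rho I(Q,W)$ term fill in details the paper leaves implicit.
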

%\bigskip
\begin{proof}
Since $+\infty > {F\mathstrut}_{1}^{\bf t}(\hat{Q}\hat{W}, \,{\widetilde{Q}\mathstrut}_{0}{\,\widetilde{\!W}\mathstrut}_{\!0})$,
then also $+\infty > {F\mathstrut}_{1}^{\bf t}({Q\mathstrut}_{0}{W\mathstrut}_{\!0}, \,{\widetilde{Q}\mathstrut}_{0}{\,\widetilde{\!W}\mathstrut}_{\!0})$.
Let ${Q\mathstrut}^{(\lambda)}{W\mathstrut}^{(\lambda)}$ be a convex combination of $\hat{Q}\hat{W}$
and ${Q\mathstrut}_{0}{W\mathstrut}_{\!0}$, as in (\ref{eqConvCombin}).
Then the function $g(\lambda) = {F\mathstrut}^{\bf t}(\rho, \,\eta, \,{Q\mathstrut}^{(\lambda)}{W\mathstrut}^{(\lambda)}, \,{\widetilde{Q}\mathstrut}_{0}{\,\widetilde{\!W}\mathstrut}_{\!0})$
is convex ($\cup$) and differentiable in $\lambda \in (0, 1)$.
Since ${Q\mathstrut}_{0}{W\mathstrut}_{\!0}$ achieves the minimum of ${F\mathstrut}^{\bf t}(\rho, \,\eta, \,QW, \,{\widetilde{Q}\mathstrut}_{0}{\,\widetilde{\!W}\mathstrut}_{\!0})$ over $QW$,
then necessarily
\begin{displaymath}
\lim_{\lambda \, \rightarrow \, 0} \frac{d g(\lambda)}{d\lambda} \; \geq \; 0.
\end{displaymath}
Differentiation results in the following condition in the limit:
\begin{align}
&
\;\;\;
{F\mathstrut}^{\bf t}(\rho, \,\eta, \,{Q\mathstrut}_{0}{W\mathstrut}_{\!0}, \,{\widetilde{Q}\mathstrut}_{0}{\,\widetilde{\!W}\mathstrut}_{\!0})
\, + \, \rho D(\hat{T}\,\|\,{T\mathstrut}_{0})
\nonumber \\
&
\;\;\;
+\,(1-\rho)\Big[ D(\hat{W}\,\|\,{W\mathstrut}_{\!0} \,|\,\hat{Q})
+ {D\mathstrut}^{\bf t}(\hat{Q}\hat{W}\,\|\,{Q\mathstrut}_{0}{W\mathstrut}_{\!0})\Big]
\nonumber \\
& \leq \; {F\mathstrut}^{\bf t}(\rho, \,\eta, \,\hat{Q}\hat{W}, \,{\widetilde{Q}\mathstrut}_{0}{\,\widetilde{\!W}\mathstrut}_{\!0}),
\label{eqDiffResult}
\end{align}
where $\hat{T}$ and ${T\mathstrut}_{0}$ denote the $y$-marginal distributions of $\hat{Q}\hat{W}$ and ${Q\mathstrut}_{0}{W\mathstrut}_{\!0}$,
respectively. %, over ${\cal Y}$.
Since ${F\mathstrut}_{1}^{\bf t}(\hat{Q}\hat{W}, \,{\widetilde{Q}\mathstrut}_{0}{\,\widetilde{\!W}\mathstrut}_{\!0}) < +\infty$,
then all terms in (\ref{eqDiffResult}) are finite.
%It follows that $D(\hat{U}\hat{W}\,\|\,{U\mathstrut}_{\!0}{W\mathstrut}_{\!0}) < +\infty$ and therefore $\text{supp}(\hat{U}) \subseteq %\text{supp}({Q\mathstrut}_{1})$.
On the other hand, by (\ref{eqFixed})
\begin{align}
& {F\mathstrut}^{\bf t}(\rho, \,\eta, \,\hat{Q}\hat{W}, \,{\widetilde{Q}\mathstrut}_{0}{\,\widetilde{\!W}\mathstrut}_{\!0}) \; =
\nonumber \\
& {F\mathstrut}^{\bf t}(\rho, \,\eta, \,\hat{Q}\hat{W}, \,\hat{Q}\hat{W}) \, + \, (1-\rho) {D\mathstrut}^{\bf t}(\hat{Q}\hat{W}\,\|\,{\widetilde{Q}\mathstrut}_{0}{\,\widetilde{\!W}\mathstrut}_{\!0}).
\label{eqOtherHand}
\end{align}
Combining (\ref{eqOtherHand}) with (\ref{eqDiffResult}),
noting that ${Q\mathstrut}_{0}{W\mathstrut}_{\!0} = {\widetilde{Q}\mathstrut}_{1}{\,\widetilde{\!W}\mathstrut}_{\!1}$,
and omitting non-negative terms $(1-\rho)D(\hat{W}\,\|\,{W\mathstrut}_{\!0} \,|\,\hat{Q}) \geq 0$ and
$\rho D(\hat{T}\,\|\,{T\mathstrut}_{0}) \geq 0$,
%replacing $D(\hat{U}\hat{W}\,\|\,{U\mathstrut}_{\!0}{W\mathstrut}_{\!0})$ with $D(\hat{U}\,\|\,{U\mathstrut}_{\!0})$,
we obtain a weaker inequality (\ref{eqBoundFixed}).
\end{proof}

\bigskip

{\em Proof of Theorem~\ref{thmConvergeGradient}:}
Using (\ref{eqFixed}), (\ref{eqE0}), it can be verified,
that the RHS of (\ref{eqConvergeGrad}) can be rewritten as
\begin{equation} \label{eqRewrite}
\min_{\substack{\\\widetilde{Q}, \, \,\widetilde{\!W}:\\ {D\mathstrut}^{\bf t}(\widetilde{Q}\,\widetilde{\!W}\!, \,{\widetilde{Q}\mathstrut}_{0}{\,\widetilde{\!W}\mathstrut}_{\!0})
\,<\, \infty}}
\!\!\!\!\!\!\!\!\!\!\!\!\!
{E\mathstrut}_{0}^{\bf t}(\rho, \eta, \widetilde{Q}\,\widetilde{\!W}) = \!\!\!\!
\min_{\substack{\\Q, \, W:\\ {D\mathstrut}^{\bf t}(QW, \,{\widetilde{Q}\mathstrut}_{0}{\,\widetilde{\!W}\mathstrut}_{\!0})
\,<\, \infty}}
\!\!\!\!\!\!\!\!\!\!\!\!\!
{F\mathstrut}^{\bf t}(\rho, \eta, QW, QW).
\end{equation}
Suppose (\ref{eqRewrite}) is finite and
let $\hat{Q}\hat{W}$ achieve the minimum on the RHS of (\ref{eqRewrite}). Then by Lemma~\ref{lem5PPFixed}
we conclude that for all iterations $\ell = 0, 1, 2, ... \, ,$
it holds that
\begin{align}
& {F\mathstrut}^{\bf t}(\rho, \,\eta, \,{Q\mathstrut}_{\ell}{W\mathstrut}_{\!\ell}, \,{\widetilde{Q}\mathstrut}_{\ell}{\,\widetilde{\!W}\mathstrut}_{\!\ell})
\;\; \leq \;\;
{F\mathstrut}^{\bf t}(\rho, \eta, \hat{Q}\hat{W}, \hat{Q}\hat{W}) %\; +
\nonumber \\
&
 + \, %F(\rho, \eta, \hat{U}\hat{W}, {Q\mathstrut}_{0})
(1 - \rho)\Big[{F\mathstrut}_{1}^{\bf t}(\hat{Q}\hat{W}, \,{\widetilde{Q}\mathstrut}_{\ell}{\,\widetilde{\!W}\mathstrut}_{\!\ell})
\, - \, %F(\rho, \eta, \hat{U}\hat{W}, {Q\mathstrut}_{1}).
{F\mathstrut}_{1}^{\bf t}(\hat{Q}\hat{W}, \,{\widetilde{Q}\mathstrut}_{\ell\,+\,1}{\,\widetilde{\!W}\mathstrut}_{\!\ell\,+\,1})\Big].
\nonumber
\end{align}
The conclusion of the proof is the same as in Theorem~\ref{thmConvergence}.
$\square$

\bigskip

The next two sections show convergence of fixed-slope computation in the directions of $R$ and $\alpha$, respectively.
They are similar in structure to Section~\ref{FixedSlope}.

\bigskip

\section{Convergence %of iterative minimization
for fixed $\alpha$ and $\rho$}\label{FixedMixed}

\bigskip

%Here
In this section we show convergence of an iterative minimization at a fixed slope $\rho$ in the direction of $R$, i.e., for a given $\alpha$.
With the help of (\ref{eqFixed}) let us define ${F\mathstrut}^{\bf t}(\rho, QW, \widetilde{Q}\,\widetilde{\!W}) \;\triangleq\; \left.{F\mathstrut}^{\bf t}(\rho, \eta, QW, \widetilde{Q}\,\widetilde{\!W})\right|_{\eta \, = \, 0}$ and
%Let us define
%\begin{equation} \label{eqFixedMix}
%F(\rho, UW, Q) \; \triangleq \; D( U W \, \| \, U P) \, + \, (1 - \rho)D(U \, \| \, Q) \, - \, \rho \,I(U, W).
%\end{equation}
%where the expectaion $\mathbb{E}\,[f(X)]$ is according to $U(x)$.
\begin{equation} \label{eqE0Mix}
{E\mathstrut}_{0}^{\bf t}(\rho, \widetilde{Q}\,\widetilde{\!W}, \alpha) \; \triangleq \;
\min_{\substack{\\Q, \, W:\\
\mathbb{E}_{Q}[f(X)]\;\leq \; \alpha
}}
{F\mathstrut}^{\bf t}(\rho, QW, \widetilde{Q}\,\widetilde{\!W}).
\end{equation}
Here ${E\mathstrut}_{0}^{\bf t}(\rho, \widetilde{Q}\,\widetilde{\!W}, \alpha)$ plays a role of ``${E\mathstrut}_{0}$'' of a supporting line in the variable $R$ of the function
$E(R) = {E\mathstrut}_{\!c}^{\bf t}(\widetilde{Q}\,\widetilde{\!W}, R, \alpha)$, defined in (\ref{eqDef}), as shown by the following lemma.

\bigskip

\begin{lemma} %[]
\label{lemSuppLineMix}%\newline
{\em For any $0 \leq \rho < 1$ it holds that}
\begin{equation} \label{eqEquivMix}
{E\mathstrut}_{\!c}^{\bf t}(\widetilde{Q}\,\widetilde{\!W}, R, \alpha) \;\; \geq \;\;
{E\mathstrut}_{0}^{\bf t}(\rho, \widetilde{Q}\,\widetilde{\!W}, \alpha) \, + \, \rho R,
\end{equation}
{\em and there exists %such
$R \geq 0$ which satisfies (\ref{eqEquivMix}) with equality.}
% that there is equality in (\ref{eqEquivMix}).}
\end{lemma}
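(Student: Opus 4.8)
The plan is to prove Lemma~\ref{lemSuppLineMix} as the $\eta=0$ specialization of Lemma~\ref{lemSuppLine}, since ${F\mathstrut}^{\bf t}(\rho, QW, \widetilde{Q}\,\widetilde{\!W})$ is by definition just ${F\mathstrut}^{\bf t}(\rho, \eta, QW, \widetilde{Q}\,\widetilde{\!W})$ evaluated at $\eta=0$, and consequently ${E\mathstrut}_{0}^{\bf t}(\rho, \widetilde{Q}\,\widetilde{\!W}, \alpha)$ agrees with the constrained version of ${E\mathstrut}_{0}^{\bf t}(\rho, \eta, \widetilde{Q}\,\widetilde{\!W})$ when $\eta=0$. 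I will not simply cite Lemma~\ref{lemSuppLine}, however, because there the minimization in (\ref{eqE0}) is \emph{unconstrained} (no input constraint), whereas here (\ref{eqE0Mix}) retains the constraint $\mathbb{E}_{Q}[f(X)]\leq\alpha$. So the argument must be re-run with the constraint kept in place throughout.

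First I would start, exactly as in the proof of Lemma~\ref{lemSuppLine}, from the definition (\ref{eqDef}) rewritten with the $|\cdot|^{+}$ term exposed, i.e.\ from the constrained minimum of
\begin{displaymath}
D(W\,\|\,P\,|\,Q) + {D\mathstrut}^{\bf t}(QW,\,\widetilde{Q}\,\widetilde{\!W}) + \big|R - I(Q,W) - {D\mathstrut}^{\bf t}(QW,\,\widetilde{Q}\,\widetilde{\!W})\big|^{+}.
\end{displaymath}
Then I would use the elementary bound $|s|^{+}\geq \rho\, s$ valid for any $0\leq\rho<1$ (and indeed $0\leq\rho\leq 1$), applied with $s = R - I(Q,W) - {D\mathstrut}^{\bf t}(QW,\,\widetilde{Q}\,\widetilde{\!W})$, to lower-bound the objective while keeping the same constraint set. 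This is precisely step $(a)$ of Lemma~\ref{lemSuppLine} but now without the $\eta$ term (equivalently $\eta=0$); the constraint $\mathbb{E}_{Q}[f(X)]\leq\alpha$ is simply not relaxed. Collecting the resulting $\rho R$ additively and recognizing the remainder as the minimization defining ${E\mathstrut}_{0}^{\bf t}(\rho, \widetilde{Q}\,\widetilde{\!W}, \alpha)$ via (\ref{eqE0Mix}) yields the inequality (\ref{eqEquivMix}).

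For the equality claim, the argument is slightly cleaner than in Lemma~\ref{lemSuppLine} because there is no $\eta$ and hence no $\alpha$ to tune. Assuming (\ref{eqEquivMix}) is finite, I would let ${Q\mathstrut}_{\rho},{W\mathstrut}_{\!\rho}$ be a minimizer of (\ref{eqE0Mix}) and then choose
\begin{displaymath}
R \;=\; I({Q\mathstrut}_{\rho},{W\mathstrut}_{\!\rho}) + {D\mathstrut}^{\bf t}({Q\mathstrut}_{\rho}{W\mathstrut}_{\!\rho},\,\widetilde{Q}\,\widetilde{\!W}),
\end{displaymath}
so that the argument of $|\cdot|^{+}$ is exactly zero and the bound $|s|^{+}\geq\rho s$ holds with equality at this minimizer. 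I must verify that this $R$ is nonnegative, which holds since $I\geq 0$ and ${D\mathstrut}^{\bf t}\geq 0$; the input constraint is automatically respected because ${Q\mathstrut}_{\rho}$ is feasible by construction. The main (and only genuine) obstacle is the finiteness bookkeeping: the bound $|s|^{+}\geq\rho s$ and the substitution of the minimizer are only meaningful where the divergences are finite, so I would note at the outset that feasibility of the set in (\ref{eqE0Mix}) together with finiteness of (\ref{eqEquivMix}) guarantees a finite minimizer with finite ${D\mathstrut}^{\bf t}$, after which every manipulation is an equality on finite quantities. This mirrors the corresponding caveat in the proof of Lemma~\ref{lemSuppLine}, so no new difficulty arises beyond carrying the constraint through unchanged.
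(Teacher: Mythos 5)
Your proposal is correct and follows essentially the same route as the paper, which proves Lemma~\ref{lemSuppLineMix} by declaring it ``similar to Lemma~\ref{lemSuppLine}'': you specialize to $\eta=0$, apply $|s|^{+}\geq\rho s$ inside the constrained minimum without relaxing $\mathbb{E}_{Q}[f(X)]\leq\alpha$, and obtain equality by choosing $R = I({Q\mathstrut}_{\rho},{W\mathstrut}_{\!\rho}) + {D\mathstrut}^{\bf t}({Q\mathstrut}_{\rho}{W\mathstrut}_{\!\rho},\,\widetilde{Q}\,\widetilde{\!W})$ at a minimizer of (\ref{eqE0Mix}), exactly paralleling the choice that zeroes the square brackets in the proof of Lemma~\ref{lemSuppLine}. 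The finiteness and nonnegativity checks you add are the same caveats the paper makes, so no new content or gap arises.
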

\begin{proof}
Similar to Lemma~\ref{lemSuppLine}.
\end{proof}

\bigskip

An iterative minimization procedure at a fixed slope $\rho$ is given by
\begin{align}
&
\begin{array}{l}
\displaystyle
\;\;\;\;\;\;\;\;\;
{Q\mathstrut}_{\ell}{W\mathstrut}_{\!\ell} \;\;\, \in \;\;
\underset{\substack{\\Q, \, W:\\
\mathbb{E}_{Q}[f(X)] \;\leq \; \alpha
}}{\arg\min}
{F\mathstrut}^{\bf t}(\rho, \,QW, \,{\widetilde{Q}\mathstrut}_{\ell}{\,\widetilde{\!W}\mathstrut}_{\!\ell}), \\ %\;\;\; \ell \, = \, 0, 1, 2, ...
\displaystyle
{\widetilde{Q}\mathstrut}_{\ell\, + \, 1}{\,\widetilde{\!W}\mathstrut}_{\!\ell\, + \, 1} \;\; = \;\;
\;\;\;\;
{Q\mathstrut}_{\ell}{W\mathstrut}_{\!\ell},
\end{array}
\label{eqMinProcFixMix1} \\
& \;\;\;\;\;\;\;\;\;\;\;\;\;\;\;\;\;\;\;\;\;\;\;\;\;\;\;\;\;\;\;\;\;\;\;\;\;\;\;\;\;\;\;\;\;\;\;\;\,
\ell \, = \, 0, 1, 2, ... \,.
\nonumber
\end{align}
It is assumed that ${\widetilde{Q}\mathstrut}_{0}{\,\widetilde{\!W}\mathstrut}_{\!0}$
in (\ref{eqMinProcFixMix1})
is chosen such that the set
$\big\{QW:  \sum_{x}Q(x)f(x) \leq \alpha, {F\mathstrut}_{1}^{\bf t}(QW, \,{\widetilde{Q}\mathstrut}_{0}{\,\widetilde{\!W}\mathstrut}_{\!0}) < +\infty \big\}$
is non-empty, so that
${F\mathstrut}^{\bf t}(\rho, \,{Q\mathstrut}_{0}{W\mathstrut}_{\!0}, \,{\widetilde{Q}\mathstrut}_{0}{\,\widetilde{\!W}\mathstrut}_{\!0}) = {E\mathstrut}_{\!c}^{\bf t}(\rho, {\widetilde{Q}\mathstrut}_{0}{\,\widetilde{\!W}\mathstrut}_{\!0}, \alpha) < +\infty$.
By the definition of ${F\mathstrut}^{\bf t}(\rho, QW, \widetilde{Q}\,\widetilde{\!W})$
according to (\ref{eqFixed}), this procedure results in a monotonically non-increasing sequence
${E\mathstrut}_{0}^{\bf t}(\rho, {\widetilde{Q}\mathstrut}_{\ell}{\,\widetilde{\!W}\mathstrut}_{\!\ell}, \alpha)$,
$\ell = 0, 1, 2, ... \,$.
The main result of this section is stated in the following theorem.

\bigskip

\begin{thm}%[]
\label{thmConvergeMix1}%\newline
{\em Let ${\big\{{Q\mathstrut}_{\ell}{W\mathstrut}_{\!\ell}\big\}\mathstrut}_{\ell \, = \, 0}^{+\infty}$ be a sequence of iterative solutions produced by (\ref{eqMinProcFixMix1}). %\newline
%If $\,{E\mathstrut}_{\!c}({Q\mathstrut}_{0}, R, \alpha) < +\infty$, %\newline
Then}
\begin{equation} \label{eqConvergeMix1}
%F(\rho, \,\eta, \,{U\mathstrut}_{\!\ell}\,{W\mathstrut}_{\!\ell}, \,{Q\mathstrut}_{\ell})
{E\mathstrut}_{0}^{\bf t}(\rho, {\widetilde{Q}\mathstrut}_{\ell}{\,\widetilde{\!W}\mathstrut}_{\!\ell}, \alpha)
\; \overset{\ell \, \rightarrow\,\infty}{\searrow} \;
\min_{\substack{\\\widetilde{Q}, \, \,\widetilde{\!W}:\\ {D\mathstrut}^{\bf t}(\widetilde{Q}\,\widetilde{\!W}\!, \,{\widetilde{Q}\mathstrut}_{0}{\,\widetilde{\!W}\mathstrut}_{\!0})
\,<\, \infty}}
%F(\rho, \eta, UW, U),
{E\mathstrut}_{0}^{\bf t}(\rho, \widetilde{Q}\,\widetilde{\!W}, \alpha),
\end{equation}
{\em where ${E\mathstrut}_{0}^{\bf t}(\rho, \widetilde{Q}\,\widetilde{\!W}, \alpha)$ is defined in (\ref{eqE0Mix})
and ${D\mathstrut}^{\bf t}(\cdot\,,\cdot)$ in (\ref{eqCombination}).}
\end{thm}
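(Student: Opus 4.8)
\emph{Proof plan for Theorem~\ref{thmConvergeMix1}.}
The plan is to follow Section~\ref{FixedSlope} almost line for line, viewing the present problem as the constrained ($\eta = 0$) counterpart of Theorem~\ref{thmConvergeGradient}. The single change relative to Lemma~\ref{lem5PPFixed} is that the minimization in (\ref{eqMinProcFixMix1}) now runs over the convex feasible set ${\cal S}$ from the proof of Lemma~\ref{lem5PP} rather than over all $QW$. First I would establish a ``five points'' inequality: for every $\hat{Q}\hat{W}$ with $\sum_{x}\hat{Q}(x)f(x) \leq \alpha$ and ${F\mathstrut}_{1}^{\bf t}(\hat{Q}\hat{W}, \,{\widetilde{Q}\mathstrut}_{0}{\,\widetilde{\!W}\mathstrut}_{\!0}) < +\infty$,
\begin{align}
& {F\mathstrut}^{\bf t}(\rho, \,{Q\mathstrut}_{0}{W\mathstrut}_{\!0}, \,{\widetilde{Q}\mathstrut}_{0}{\,\widetilde{\!W}\mathstrut}_{\!0})
\;\; \leq \;\;
{F\mathstrut}^{\bf t}(\rho, \hat{Q}\hat{W}, \hat{Q}\hat{W})
\nonumber \\
&
\;\;\; + \, (1-\rho)\Big[{F\mathstrut}_{1}^{\bf t}(\hat{Q}\hat{W}, \,{\widetilde{Q}\mathstrut}_{0}{\,\widetilde{\!W}\mathstrut}_{\!0})
\, - \, {F\mathstrut}_{1}^{\bf t}(\hat{Q}\hat{W}, \,{\widetilde{Q}\mathstrut}_{1}{\,\widetilde{\!W}\mathstrut}_{\!1})\Big],
\nonumber
\end{align}
which is the exact analogue of (\ref{eqBoundFixed}) with $\eta = 0$.

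To prove this I would reuse the argument of Lemma~\ref{lem5PPFixed} essentially unchanged. Form the convex combination ${Q\mathstrut}^{(\lambda)}{W\mathstrut}^{(\lambda)}$ of $\hat{Q}\hat{W}$ and ${Q\mathstrut}_{0}{W\mathstrut}_{\!0}$ as in (\ref{eqConvCombin}); because the cost constraint $\sum_{x}Q(x)f(x) \leq \alpha$ is linear and both endpoints satisfy it, the entire segment stays in ${\cal S}$. The function $g(\lambda) = {F\mathstrut}^{\bf t}(\rho, \,{Q\mathstrut}^{(\lambda)}{W\mathstrut}^{(\lambda)}, \,{\widetilde{Q}\mathstrut}_{0}{\,\widetilde{\!W}\mathstrut}_{\!0})$ is convex ($\cup$) and differentiable on $(0,1)$ (the $\eta=0$ specialization of the function analysed in Lemma~\ref{lem5PPFixed}), and since ${Q\mathstrut}_{0}{W\mathstrut}_{\!0}$ minimizes ${F\mathstrut}^{\bf t}(\rho, \,QW, \,{\widetilde{Q}\mathstrut}_{0}{\,\widetilde{\!W}\mathstrut}_{\!0})$ over ${\cal S}$ while the segment is a feasible path out of it, the one-sided derivative satisfies $\lim_{\lambda \to 0} dg/d\lambda \geq 0$. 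This is precisely the ``cannot be decreased inside the convex set ${\cal S}$'' mechanism already used in Lemma~\ref{lem5PP}. Differentiating then reproduces (\ref{eqDiffResult}) verbatim with $\eta = 0$; combining it with the identity (\ref{eqOtherHand}), substituting ${Q\mathstrut}_{0}{W\mathstrut}_{\!0} = {\widetilde{Q}\mathstrut}_{1}{\,\widetilde{\!W}\mathstrut}_{\!1}$, and discarding the non-negative terms $\rho D(\hat{T}\,\|\,{T\mathstrut}_{0})$ and $(1-\rho)D(\hat{W}\,\|\,{W\mathstrut}_{\!0}\,|\,\hat{Q})$ yields the displayed inequality.

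The theorem would then follow exactly as Theorem~\ref{thmConvergeGradient} did. Using (\ref{eqFixed}) at $\eta = 0$ together with (\ref{eqE0Mix}), I would rewrite the right-hand side of (\ref{eqConvergeMix1}) as $\min\,{F\mathstrut}^{\bf t}(\rho, QW, QW)$ over all $QW$ satisfying the cost constraint and ${D\mathstrut}^{\bf t}(QW, \,{\widetilde{Q}\mathstrut}_{0}{\,\widetilde{\!W}\mathstrut}_{\!0}) < \infty$, in the spirit of (\ref{eqRewrite}). Assuming this minimum is finite and letting $\hat{Q}\hat{W}$ achieve it, the five points inequality applies at every iteration $\ell$, giving a telescoping bound, and the $\liminf$ argument of \cite[Lemma~1]{CsiszarTusnady84} — the ${F\mathstrut}_{1}^{\bf t}$ terms being bounded below by non-negativity of the divergences in (\ref{eqF1}) — forces ${E\mathstrut}_{0}^{\bf t}(\rho, {\widetilde{Q}\mathstrut}_{\ell}{\,\widetilde{\!W}\mathstrut}_{\!\ell}, \alpha)$ down to the right-hand side of (\ref{eqConvergeMix1}). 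I expect the only delicate point to be the feasible-direction derivative step: unlike in Theorem~\ref{thmConvergeGradient}, the minimizer ${Q\mathstrut}_{0}{W\mathstrut}_{\!0}$ may lie on the boundary $\sum_{x}Q(x)f(x) = \alpha$, so $\lim_{\lambda \to 0} dg/d\lambda \geq 0$ must be justified as a one-sided condition along feasible directions rather than as an unconstrained gradient vanishing — which is exactly why linearity of the cost constraint, and hence convexity of ${\cal S}$, is essential.
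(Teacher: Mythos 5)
Your proposal matches the paper's own proof: the displayed ``five points'' inequality is exactly the paper's Lemma~\ref{lem5PPFixedMixed} (proved there, as you do, by transplanting the argument of Lemma~\ref{lem5PPFixed} onto the convex constrained set ${\cal S}$), and the concluding telescoping/$\liminf$ step is lifted verbatim from Theorem~\ref{thmConvergeGradient} via the rewriting (\ref{eqRewriteMix1}). Your remark about the one-sided feasible-direction derivative at a possibly boundary minimizer is a correct and slightly more explicit justification of the step the paper leaves implicit.
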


\bigskip

To prove Theorem~\ref{thmConvergeMix1}, we use a lemma, similar to Lemma~\ref{lem5PPFixed}:

\bigskip

\begin{lemma} %[]
\label{lem5PPFixedMixed}%\newline
{\em Let $\hat{Q}\hat{W}$ be such, that $\,\sum_{x}\hat{Q}(x)f(x) \leq \alpha$
and ${F\mathstrut}_{1}^{\bf t}(\hat{Q}\hat{W}, \,{\widetilde{Q}\mathstrut}_{0}{\,\widetilde{\!W}\mathstrut}_{\!0}) < +\infty$.
%achieve the minimum in (\ref{eqAchieve}). %and suppose $\,{E\mathstrut}_{\!c}({Q\mathstrut}_{0}, R, \alpha) < +\infty$.
%and let $\text{\em supp}({Q\mathstrut}_{1}) = \text{\em supp}({Q\mathstrut}_{0})$.}\newline
Then} %$\text{\em supp}(\hat{U}) \subseteq \text{\em supp}({Q\mathstrut}_{1})$ and}
\begin{align}
& {F\mathstrut}^{\bf t}(\rho, \,{Q\mathstrut}_{0}{W\mathstrut}_{\!0}, \,{\widetilde{Q}\mathstrut}_{0}{\,\widetilde{\!W}\mathstrut}_{\!0})
\;\, \leq \;\,
{F\mathstrut}^{\bf t}(\rho, \hat{Q}\hat{W}, \hat{Q}\hat{W})
\nonumber \\
&
\;\;\;
 + \, %F(\rho, \eta, \hat{U}\hat{W}, {Q\mathstrut}_{0})
(1 - \rho)\Big[{F\mathstrut}_{1}^{\bf t}(\hat{Q}\hat{W}, \,{\widetilde{Q}\mathstrut}_{0}{\,\widetilde{\!W}\mathstrut}_{\!0})
\, - \, %F(\rho, \eta, \hat{U}\hat{W}, {Q\mathstrut}_{1}).
{F\mathstrut}_{1}^{\bf t}(\hat{Q}\hat{W}, \,{\widetilde{Q}\mathstrut}_{1}{\,\widetilde{\!W}\mathstrut}_{\!1})\Big].
\label{eqBoundFixedMixed}
\end{align}
\end{lemma}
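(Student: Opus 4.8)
The plan is to adapt the proof of Lemma~\ref{lem5PPFixed} almost verbatim, the only new ingredient being the input constraint $\sum_{x}Q(x)f(x)\leq\alpha$, which I would handle exactly as the convex feasible set was handled in the proof of Lemma~\ref{lem5PP}. First I would introduce
$\mathcal{S}\triangleq\big\{QW:\ \sum_{x}Q(x)f(x)\leq\alpha,\ {F\mathstrut}_{1}^{\bf t}(QW,\,{\widetilde{Q}\mathstrut}_{0}{\,\widetilde{\!W}\mathstrut}_{\!0})<+\infty\big\}$,
which is closed and convex. By hypothesis $\hat{Q}\hat{W}\in\mathcal{S}$, so $\mathcal{S}$ is non-empty, and by (\ref{eqMinProcFixMix1}) together with the finiteness guaranteed by the assumption on ${\widetilde{Q}\mathstrut}_{0}{\,\widetilde{\!W}\mathstrut}_{\!0}$ we also have ${Q\mathstrut}_{0}{W\mathstrut}_{\!0}\in\mathcal{S}$, in particular ${F\mathstrut}_{1}^{\bf t}({Q\mathstrut}_{0}{W\mathstrut}_{\!0},\,{\widetilde{Q}\mathstrut}_{0}{\,\widetilde{\!W}\mathstrut}_{\!0})<+\infty$.

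Next I would form the convex combination ${Q\mathstrut}^{(\lambda)}{W\mathstrut}^{(\lambda)}$ of $\hat{Q}\hat{W}$ and ${Q\mathstrut}_{0}{W\mathstrut}_{\!0}$ as in (\ref{eqConvCombin}) and study $g(\lambda)\triangleq{F\mathstrut}^{\bf t}(\rho,\,{Q\mathstrut}^{(\lambda)}{W\mathstrut}^{(\lambda)},\,{\widetilde{Q}\mathstrut}_{0}{\,\widetilde{\!W}\mathstrut}_{\!0})$, which is convex ($\cup$) and differentiable on $(0,1)$. The key feasibility observation is that the constraint is affine in $QW$, so $\mathbb{E}_{{Q\mathstrut}^{(\lambda)}}[f(X)]=\lambda\,\mathbb{E}_{\hat{Q}}[f(X)]+(1-\lambda)\,\mathbb{E}_{{Q\mathstrut}_{0}}[f(X)]\leq\alpha$; together with convexity of ${F\mathstrut}_{1}^{\bf t}$ this yields ${Q\mathstrut}^{(\lambda)}{W\mathstrut}^{(\lambda)}\in\mathcal{S}$ for every $\lambda\in(0,1)$. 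Since ${Q\mathstrut}_{0}{W\mathstrut}_{\!0}$ minimizes ${F\mathstrut}^{\bf t}(\rho,\,\cdot\,,\,{\widetilde{Q}\mathstrut}_{0}{\,\widetilde{\!W}\mathstrut}_{\!0})$ over the feasible set and the segment toward $\hat{Q}\hat{W}$ stays feasible, $g$ cannot be decreasing at $\lambda=0$, i.e. $\lim_{\lambda\to0}dg/d\lambda\geq0$.

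Carrying out the differentiation exactly as in the passage leading to (\ref{eqDiffResult}) (a Pythagorean-type identity for divergence), this one-sided condition gives the $\eta=0$ specialization of (\ref{eqDiffResult}): the left-hand side contains ${F\mathstrut}^{\bf t}(\rho,\,{Q\mathstrut}_{0}{W\mathstrut}_{\!0},\,{\widetilde{Q}\mathstrut}_{0}{\,\widetilde{\!W}\mathstrut}_{\!0})$ plus the nonnegative terms $\rho D(\hat{T}\,\|\,{T\mathstrut}_{0})$ and $(1-\rho)\big[D(\hat{W}\,\|\,{W\mathstrut}_{\!0}\,|\,\hat{Q})+{D\mathstrut}^{\bf t}(\hat{Q}\hat{W}\,\|\,{Q\mathstrut}_{0}{W\mathstrut}_{\!0})\big]$, bounded above by ${F\mathstrut}^{\bf t}(\rho,\,\hat{Q}\hat{W},\,{\widetilde{Q}\mathstrut}_{0}{\,\widetilde{\!W}\mathstrut}_{\!0})$. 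I would then substitute the $\eta=0$ form of the identity (\ref{eqOtherHand}) for the right-hand side, use ${Q\mathstrut}_{0}{W\mathstrut}_{\!0}={\widetilde{Q}\mathstrut}_{1}{\,\widetilde{\!W}\mathstrut}_{\!1}$, and drop the two nonnegative terms $\rho D(\hat{T}\,\|\,{T\mathstrut}_{0})\geq0$ and $(1-\rho)D(\hat{W}\,\|\,{W\mathstrut}_{\!0}\,|\,\hat{Q})\geq0$. Finally, since the two $D(\hat{W}\,\|\,P\,|\,\hat{Q})$ contributions cancel, the difference $(1-\rho)\big[{D\mathstrut}^{\bf t}(\hat{Q}\hat{W},\,{\widetilde{Q}\mathstrut}_{0}{\,\widetilde{\!W}\mathstrut}_{\!0})-{D\mathstrut}^{\bf t}(\hat{Q}\hat{W},\,{\widetilde{Q}\mathstrut}_{1}{\,\widetilde{\!W}\mathstrut}_{\!1})\big]$ equals $(1-\rho)\big[{F\mathstrut}_{1}^{\bf t}(\hat{Q}\hat{W},\,{\widetilde{Q}\mathstrut}_{0}{\,\widetilde{\!W}\mathstrut}_{\!0})-{F\mathstrut}_{1}^{\bf t}(\hat{Q}\hat{W},\,{\widetilde{Q}\mathstrut}_{1}{\,\widetilde{\!W}\mathstrut}_{\!1})\big]$, which is precisely the bound (\ref{eqBoundFixedMixed}).

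I expect the only genuinely delicate point, relative to Lemma~\ref{lem5PPFixed}, to be the constrained optimality step: unlike the unconstrained fixed-gradient case, ${Q\mathstrut}_{0}{W\mathstrut}_{\!0}$ may lie on the boundary $\mathbb{E}_{Q}[f(X)]=\alpha$, so one can only assert nonnegativity of the directional derivative in \emph{feasible} directions rather than two-sided stationarity. Because $\mathcal{S}$ is convex and the direction toward $\hat{Q}\hat{W}$ is feasible, this one-sided condition is exactly what the argument needs, and no Lagrange multiplier (the role played by $\eta$ in Section~\ref{FixedSlope}) is required here. All remaining manipulations are identical to those already performed for (\ref{eqDiffResult})--(\ref{eqOtherHand}).
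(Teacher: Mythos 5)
Your proposal is correct and matches the paper's intent exactly: the paper proves this lemma only by the remark ``Analogous to Lemma~\ref{lem5PPFixed},'' and your write-up is precisely that adaptation, importing the convex feasible set $\mathcal{S}$ and the one-sided directional-derivative argument from the proof of Lemma~\ref{lem5PP} to handle the input constraint. Nothing further is needed.
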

\begin{proof}
Analogous to Lemma~\ref{lem5PPFixed}.
\end{proof}

\bigskip

{\em Proof of Theorem~\ref{thmConvergeMix1}:}
The RHS of (\ref{eqConvergeMix1}) can be rewritten in terms of ${F\mathstrut}^{\bf t}(\rho, QW, \widetilde{Q}\,\widetilde{\!W})$ as:
\begin{equation} \label{eqRewriteMix1}
\min_{\substack{\\\widetilde{Q}, \, \,\widetilde{\!W}:\\ {D\mathstrut}^{\bf t}(\widetilde{Q}\,\widetilde{\!W}\!, \,{\widetilde{Q}\mathstrut}_{0}{\,\widetilde{\!W}\mathstrut}_{\!0})
\,<\, \infty}}
\!\!\!\!\!\!\!\!\!\!\!\!\!
{E\mathstrut}_{0}^{\bf t}(\rho, \widetilde{Q}\,\widetilde{\!W}, \alpha) \; =
\!\!
\min_{\substack{\\ Q, \, W:\\
\mathbb{E}_{Q}[f(X)]\;\leq \; \alpha
\\ {D\mathstrut}^{\bf t}(QW, \,{\widetilde{Q}\mathstrut}_{0}{\,\widetilde{\!W}\mathstrut}_{\!0}) \,<\, \infty
}}
\!\!\!\!\!\!\!\!\!\!\!\!\!
{F\mathstrut}^{\bf t}(\rho, QW, QW).
\end{equation}
Suppose (\ref{eqRewriteMix1}) is finite and $\hat{Q}\hat{W}$ achieves the minimum on the RHS. Then we can use Lemma~\ref{lem5PPFixedMixed}
with $\hat{Q}\hat{W}$. The rest of the proof is the same as for Theorem~\ref{thmConvergeGradient}.
$\square$

\bigskip

\section{Convergence %of iterative minimization
for fixed $R$ and $\eta$}\label{FixedMixed2}

\bigskip

%Here
In this section
we show convergence of an iterative minimization at a fixed slope $\eta$ in the direction of $\alpha$, i.e., for a given $R$.
Let us define
\begin{align}
{F\mathstrut}^{\bf t}(\eta, QW, \widetilde{Q}\,\widetilde{\!W}, R) \; \triangleq & \;
\max\Big\{{F\mathstrut}_{1}^{\bf t}(QW, \,\widetilde{Q}\,\widetilde{\!W}), \, {F\mathstrut}_{2}(QW, R)\Big\}
\nonumber \\
&
+ \, \eta \,\mathbb{E}_{Q}[f(X)],
\label{eqFMix}
\end{align}
where ${F\mathstrut}_{1}^{\bf t}(QW, \,\widetilde{Q}\,\widetilde{\!W})$ and ${F\mathstrut}_{2}(QW, R)$ are as defined in (\ref{eqF1}) and (\ref{eqF2}), respectively.
\begin{equation} \label{eqE0Mix2}
{E\mathstrut}_{0}^{\bf t}(\eta, \widetilde{Q}\,\widetilde{\!W}, R) \; \triangleq \;
\min_{\substack{\\Q, \, W
}}
{F\mathstrut}^{\bf t}(\eta, QW, \widetilde{Q}\,\widetilde{\!W}, R).
\end{equation}
Here ${E\mathstrut}_{0}^{\bf t}(\eta, \widetilde{Q}\,\widetilde{\!W}, R)$ plays a role of ``${E\mathstrut}_{0}$'' of a supporting line in the variable $\alpha$ of the function
$E(\alpha) = {E\mathstrut}_{\!c}^{\bf t}(\widetilde{Q}\,\widetilde{\!W}, R, \alpha)$, defined in (\ref{eqDef}), as shown by the following lemma.

\bigskip

\begin{lemma} %[]
\label{lemSuppLineMix2}%\newline
{\em For any $\eta \geq 0$ it holds that}
\begin{equation} \label{eqEquivMix2}
{E\mathstrut}_{\!c}^{\bf t}(\widetilde{Q}\,\widetilde{\!W}, R, \alpha) \;\; \geq \;\;
{E\mathstrut}_{0}^{\bf t}(\eta, \widetilde{Q}\,\widetilde{\!W}, R) \, - \, \eta \alpha,
\end{equation}
{\em and there exists %such
$\alpha \geq \min_{\,x} f(x)$
which satisfies (\ref{eqEquivMix2}) with equality.}
%that there is equality in (\ref{eqEquivMix2}).}
\end{lemma}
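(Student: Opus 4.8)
The plan is to mirror the proof of Lemma~\ref{lemSuppLine}, but to linearize only the input-cost constraint (with multiplier $\eta$), leaving the rate term untouched: since $R$ is held fixed here, no slope $\rho$ is introduced and the maximum ${F\mathstrut}^{\bf t}(QW, \widetilde{Q}\,\widetilde{\!W}, R) = \max\{{F\mathstrut}_{1}^{\bf t}, {F\mathstrut}_{2}\}$ is kept intact throughout. The relaxation is therefore a one-line Lagrangian bound rather than the full Csisz\'ar--Tusn\'ady machinery used in the five-points-type lemmas.

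For the inequality (\ref{eqEquivMix2}), I would observe that by (\ref{eqFMix}) and (\ref{eqF}), ${F\mathstrut}^{\bf t}(\eta, QW, \widetilde{Q}\,\widetilde{\!W}, R) = {F\mathstrut}^{\bf t}(QW, \widetilde{Q}\,\widetilde{\!W}, R) + \eta\,\mathbb{E}_{Q}[f(X)]$. Hence, for every $QW$ feasible for (\ref{eqE0Mix2}), i.e.\ with $\mathbb{E}_{Q}[f(X)] \leq \alpha$, and for every $\eta \geq 0$, the slack $\eta(\mathbb{E}_{Q}[f(X)] - \alpha)$ is non-positive, so
\[
{F\mathstrut}^{\bf t}(QW, \widetilde{Q}\,\widetilde{\!W}, R) \;\geq\; {F\mathstrut}^{\bf t}(\eta, QW, \widetilde{Q}\,\widetilde{\!W}, R) - \eta\alpha.
\]
Minimizing the left-hand side over the feasible set gives ${E\mathstrut}_{\!c}^{\bf t}(\widetilde{Q}\,\widetilde{\!W}, R, \alpha)$ by (\ref{eqDef}); minimizing the right-hand side and then dropping the constraint (which only enlarges the feasible set) gives ${E\mathstrut}_{0}^{\bf t}(\eta, \widetilde{Q}\,\widetilde{\!W}, R) - \eta\alpha$ by (\ref{eqE0Mix2}), establishing (\ref{eqEquivMix2}).

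For the tightness claim, assuming (as in the ``Suppose (\ref{eqConvexEnv}) is finite'' step of Lemma~\ref{lemSuppLine}) that the unconstrained minimum (\ref{eqE0Mix2}) is finite, I would let ${Q\mathstrut}_{\eta}{W\mathstrut}_{\!\eta}$ attain it and then choose $\alpha = \mathbb{E}_{{Q\mathstrut}_{\eta}}[f(X)]$. This $\alpha$ automatically satisfies $\alpha \geq \min_{x} f(x)$, since an expectation of $f$ is at least its minimum; moreover ${Q\mathstrut}_{\eta}{W\mathstrut}_{\!\eta}$ then meets the cost constraint with equality, so it is feasible for (\ref{eqDef}), and the slack $\eta(\mathbb{E}_{{Q\mathstrut}_{\eta}}[f(X)] - \alpha)$ vanishes. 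Consequently the displayed inequality holds with equality at ${Q\mathstrut}_{\eta}{W\mathstrut}_{\!\eta}$, and evaluating (\ref{eqDef}) there yields ${E\mathstrut}_{\!c}^{\bf t}(\widetilde{Q}\,\widetilde{\!W}, R, \alpha) \leq {E\mathstrut}_{0}^{\bf t}(\eta, \widetilde{Q}\,\widetilde{\!W}, R) - \eta\alpha$, matching the lower bound.

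I do not expect a genuine obstacle here: unlike the five-points-type lemmas, no convexity, differentiation, or Pythagorean identity is required. The only points needing care are the standing finiteness/existence assumption for the minimizer of (\ref{eqE0Mix2}) --- guaranteed by compactness of the probability simplex together with lower semicontinuity of the divergence terms --- and checking that the selected $\alpha$ lies in the admissible range $\alpha \geq \min_{x} f(x)$, which is immediate.
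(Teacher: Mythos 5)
Your proposal is correct and follows essentially the same route as the paper, whose proof of this lemma is simply ``Similar to Lemma~\ref{lemSuppLine}'': you linearize only the cost constraint via the Lagrangian slack $\eta(\mathbb{E}_{Q}[f(X)]-\alpha)\leq 0$, drop the constraint, and obtain equality by taking an unconstrained minimizer ${Q\mathstrut}_{\eta}{W\mathstrut}_{\!\eta}$ of (\ref{eqE0Mix2}) and setting $\alpha=\mathbb{E}_{{Q\mathstrut}_{\eta}}[f(X)]$, exactly as in the proof of Lemma~\ref{lemSuppLine} restricted to the $\alpha$-direction. No gaps.
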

\begin{proof}
Similar to Lemma~\ref{lemSuppLine}.
\end{proof}

\bigskip

An iterative minimization procedure at a fixed slope $\eta$ is defined as follows.
\begin{align}
&
\begin{array}{l}
\displaystyle
\;\;\;\;\;\;\;\;
{Q\mathstrut}_{\ell}{W\mathstrut}_{\!\ell} \;\;\; \in \;\;
\underset{\substack{\\Q, \, W}}{\arg\min} \;
{F\mathstrut}^{\bf t}(\eta, \,QW, \,{\widetilde{Q}\mathstrut}_{\ell}{\,\widetilde{\!W}\mathstrut}_{\!\ell}, \,R), \\
%\label{eqStep1} \\
{\widetilde{Q}\mathstrut}_{\ell\, + \, 1}{\,\widetilde{\!W}\mathstrut}_{\!\ell\, + \, 1} \;\, = \;\;\; {Q\mathstrut}_{\ell}{W\mathstrut}_{\!\ell},
\end{array}
\label{eqMinProcedureMix2} \\
&
\;\;\;\;\;\;\;\;\;\;\;\;\;\;\;\;\;\;\;\;\;
\;\;\;\;\;\;\;\;\;\;\;\;\;\;\;\;\;\;\;\;\;\;\;\;\;\;\;
\ell \, = \, 0, 1, 2, ... \,.
\nonumber
\end{align}
It is assumed that the set $\big\{QW:  {F\mathstrut}_{1}^{\bf t}(QW, \,{\widetilde{Q}\mathstrut}_{0}{\,\widetilde{\!W}\mathstrut}_{\!0}) < +\infty \big\}$
is non-empty,
which guarantees
${F\mathstrut}^{\bf t}(\eta, \,{Q\mathstrut}_{0}{W\mathstrut}_{\!0}, \,{\widetilde{Q}\mathstrut}_{0}{\,\widetilde{\!W}\mathstrut}_{\!0}, \,R) = {E\mathstrut}_{0}^{\bf t}(\eta, {\widetilde{Q}\mathstrut}_{0}{\,\widetilde{\!W}\mathstrut}_{\!0}, R) < +\infty$.
The iterative procedure results in a monotonically non-increasing sequence ${E\mathstrut}_{0}^{\bf t}(\eta, {\widetilde{Q}\mathstrut}_{\ell}{\,\widetilde{\!W}\mathstrut}_{\!\ell}, R)$, $\ell = 0, 1, 2, ... \,$, as can be seen from (\ref{eqFMix}), (\ref{eqE0Mix2}).
The sequence converges to the global minimum in the set $\big\{\widetilde{Q}\,\widetilde{\!W}: {D\mathstrut}^{\bf t}(\widetilde{Q}\,\widetilde{\!W}, \,{\widetilde{Q}\mathstrut}_{0}{\,\widetilde{\!W}\mathstrut}_{\!0}) \,<\, +\infty\big\}$, as %given by
stated in
the following theorem.

\bigskip

\begin{thm}%[]
\label{thmConvergeMix2}%\newline
{\em Let ${\big\{{Q\mathstrut}_{\ell}{W\mathstrut}_{\!\ell}\big\}\mathstrut}_{\ell \, = \, 0}^{+\infty}$ be a sequence of iterative solutions produced by (\ref{eqMinProcedureMix2}). %\newline
%If $\,{E\mathstrut}_{\!c}({Q\mathstrut}_{0}, R, \alpha) < +\infty$, %\newline
Then}
\begin{equation} \label{eqConvergeMix2}
%F(\rho, \,\eta, \,{U\mathstrut}_{\!\ell}\,{W\mathstrut}_{\!\ell}, \,{Q\mathstrut}_{\ell})
{E\mathstrut}_{0}^{\bf t}(\eta, {\widetilde{Q}\mathstrut}_{\ell}{\,\widetilde{\!W}\mathstrut}_{\!\ell}, R)
\; \overset{\ell \, \rightarrow\,\infty}{\searrow} \;
\min_{\substack{\\\widetilde{Q}, \, \,\widetilde{\!W}:\\ {D\mathstrut}^{\bf t}(\widetilde{Q}\,\widetilde{\!W}\!, \,{\widetilde{Q}\mathstrut}_{0}{\,\widetilde{\!W}\mathstrut}_{\!0})
\,<\, \infty}}
{E\mathstrut}_{0}^{\bf t}(\eta, \widetilde{Q}\,\widetilde{\!W}, R),
\end{equation}
{\em where ${E\mathstrut}_{0}^{\bf t}(\eta, \widetilde{Q}\,\widetilde{\!W}, R)$ is defined in (\ref{eqE0Mix2}) and ${D\mathstrut}^{\bf t}(\cdot\,,\cdot)$ in (\ref{eqCombination}).}
\end{thm}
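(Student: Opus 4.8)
The plan is to mirror the proof of Theorem~\ref{thmConvergence} essentially verbatim, because the functional ${F\mathstrut}^{\bf t}(\eta, QW, \widetilde{Q}\,\widetilde{\!W}, R)$ of (\ref{eqFMix}) is exactly the maximum-of-two-terms object (\ref{eqF}) with the single extra summand $\eta\,\mathbb{E}_{Q}[f(X)]$. Since $\max\{A,B\}+c=\max\{A+c,\,B+c\}$ and $\mathbb{E}_{Q}[f(X)]=\sum_{x}Q(x)f(x)$ is linear, hence convex ($\cup$), continuous, and differentiable in $QW$, both branches ${F\mathstrut}_{1}^{\bf t}(QW, \widetilde{Q}\,\widetilde{\!W})+\eta\,\mathbb{E}_{Q}[f(X)]$ and ${F\mathstrut}_{2}(QW, R)+\eta\,\mathbb{E}_{Q}[f(X)]$ retain the convexity and continuity properties exploited in Lemma~\ref{lem5PP}. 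The first step is therefore to establish the corresponding five-points property: for any $\hat{Q}\hat{W}$ with ${F\mathstrut}_{1}^{\bf t}(\hat{Q}\hat{W}, {\widetilde{Q}\mathstrut}_{0}{\,\widetilde{\!W}\mathstrut}_{\!0})<+\infty$,
\begin{align}
{F\mathstrut}^{\bf t}(\eta, {Q\mathstrut}_{0}{W\mathstrut}_{\!0}, {\widetilde{Q}\mathstrut}_{0}{\,\widetilde{\!W}\mathstrut}_{\!0}, R) \;\leq\; & {F\mathstrut}^{\bf t}(\eta, \hat{Q}\hat{W}, \hat{Q}\hat{W}, R) \nonumber \\
& +\, \big|{F\mathstrut}_{1}^{\bf t}(\hat{Q}\hat{W}, {\widetilde{Q}\mathstrut}_{0}{\,\widetilde{\!W}\mathstrut}_{\!0}) - {F\mathstrut}_{1}^{\bf t}(\hat{Q}\hat{W}, {\widetilde{Q}\mathstrut}_{1}{\,\widetilde{\!W}\mathstrut}_{\!1})\big|^{+}. \nonumber
\end{align}

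I would prove this by repeating the three-case analysis of Lemma~\ref{lem5PP}, split according to whether ${F\mathstrut}_{1}^{\bf t}({Q\mathstrut}_{0}{W\mathstrut}_{\!0}, {\widetilde{Q}\mathstrut}_{0}{\,\widetilde{\!W}\mathstrut}_{\!0})$ exceeds, is below, or equals ${F\mathstrut}_{2}({Q\mathstrut}_{0}{W\mathstrut}_{\!0}, R)$; the common summand $\eta\,\mathbb{E}_{{Q\mathstrut}_{0}}[f(X)]$ does not affect which branch is active. In the branch where the first term dominates, I differentiate the convex function $\lambda\mapsto {F\mathstrut}_{1}^{\bf t}({Q\mathstrut}^{(\lambda)}{W\mathstrut}^{(\lambda)}, {\widetilde{Q}\mathstrut}_{0}{\,\widetilde{\!W}\mathstrut}_{\!0})+\eta\,\mathbb{E}_{{Q\mathstrut}^{(\lambda)}}[f(X)]$ along the segment (\ref{eqConvCombin}) and use the minimality of ${Q\mathstrut}_{0}{W\mathstrut}_{\!0}$. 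The only departure from (\ref{eqPyth}) is that the linear term contributes the constant $\eta\big(\mathbb{E}_{\hat{Q}}[f(X)]-\mathbb{E}_{{Q\mathstrut}_{0}}[f(X)]\big)$ to the derivative; combining the resulting Pythagorean inequality with the identity (\ref{eqIdent}) and the update ${Q\mathstrut}_{0}{W\mathstrut}_{\!0}={\widetilde{Q}\mathstrut}_{1}{\,\widetilde{\!W}\mathstrut}_{\!1}$, the $\eta\,\mathbb{E}_{\hat{Q}}[f(X)]$ piece lands exactly so as to rebuild ${F\mathstrut}^{\bf t}(\eta, \hat{Q}\hat{W}, \hat{Q}\hat{W}, R)$ on the right after discarding the non-negative $D(\hat{W}\,\|\,{W\mathstrut}_{\!0}\,|\,\hat{Q})$. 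The remaining two branches use convexity of ${F\mathstrut}_{2}$ to bound ${F\mathstrut}_{2}({Q\mathstrut}_{0}{W\mathstrut}_{\!0}, R)+\eta\,\mathbb{E}_{{Q\mathstrut}_{0}}[f(X)]$ by ${F\mathstrut}^{\bf t}(\eta, \hat{Q}\hat{W}, \hat{Q}\hat{W}, R)$, the non-negativity of $|\cdot|^{+}$ then making the unified bound valid in every case.

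With the lemma in hand, the theorem follows as in Theorem~\ref{thmConvergence}. Using (\ref{eqFMix}) and (\ref{eqE0Mix2}), I first rewrite the right-hand side of (\ref{eqConvergeMix2}) as the diagonal minimum $\min_{QW:\,{D\mathstrut}^{\bf t}(QW, {\widetilde{Q}\mathstrut}_{0}{\,\widetilde{\!W}\mathstrut}_{\!0})<\infty}{F\mathstrut}^{\bf t}(\eta, QW, QW, R)$; this interchange is legitimate because, for fixed $QW$, the inner minimisation over $\widetilde{Q}\,\widetilde{\!W}$ is attained at $\widetilde{Q}\,\widetilde{\!W}=QW$ (where ${D\mathstrut}^{\bf t}(QW, \widetilde{Q}\,\widetilde{\!W})=0$), and the support inclusions guarantee the constraint is preserved, exactly as in (\ref{eqAchieve}). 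Assuming this value finite and letting $\hat{Q}\hat{W}$ attain it, the lemma applied at every iteration produces the same dichotomy as before: either some finite $\ell$ already satisfies ${F\mathstrut}^{\bf t}(\eta, {Q\mathstrut}_{\ell}{W\mathstrut}_{\!\ell}, {\widetilde{Q}\mathstrut}_{\ell}{\,\widetilde{\!W}\mathstrut}_{\!\ell}, R)\leq {F\mathstrut}^{\bf t}(\eta, \hat{Q}\hat{W}, \hat{Q}\hat{W}, R)$, or the $|\cdot|^{+}$ difference stays active for all $\ell$ and the telescoping argument of \cite[Lemma~1]{CsiszarTusnady84}, using that the divergences in ${F\mathstrut}_{1}^{\bf t}$ are bounded below, drives the monotone sequence ${E\mathstrut}_{0}^{\bf t}(\eta, {\widetilde{Q}\mathstrut}_{\ell}{\,\widetilde{\!W}\mathstrut}_{\!\ell}, R)$ down to that diagonal minimum. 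I expect the only real work to be the bookkeeping inside the lemma: confirming that after differentiating the added linear cost the stray constant $\eta\big(\mathbb{E}_{\hat{Q}}[f(X)]-\mathbb{E}_{{Q\mathstrut}_{0}}[f(X)]\big)$ recombines into ${F\mathstrut}^{\bf t}(\eta, \hat{Q}\hat{W}, \hat{Q}\hat{W}, R)$ with no uncontrolled residual; once that is checked, no analytic difficulty arises beyond those already settled in Lemmas~\ref{lem5PP} and~\ref{lem5PPFixed}.
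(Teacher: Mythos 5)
Your proposal is correct and follows essentially the same route as the paper: the paper proves Theorem~\ref{thmConvergeMix2} by establishing Lemma~\ref{lem5PPMix2} ``similarly to Lemma~\ref{lem5PP}'' and then repeating the argument of Theorem~\ref{thmConvergence} verbatim, which is exactly your plan. Your additional bookkeeping of how the linear term $\eta\,\mathbb{E}_{Q}[f(X)]$ enters the differentiation and recombines on the right-hand side correctly fills in the details the paper leaves implicit.
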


\bigskip

To prove this theorem, we use a lemma, which is similar to Lemma~\ref{lem5PP}:

\bigskip

\begin{lemma} %[]
\label{lem5PPMix2}%\newline
{\em Let $\hat{Q}\hat{W}$ be such that ${F\mathstrut}_{1}^{\bf t}(\hat{Q}\hat{W}, \,{\widetilde{Q}\mathstrut}_{0}{\,\widetilde{\!W}\mathstrut}_{\!0}) < +\infty$. Then}
%\newline
\begin{align}
& {F\mathstrut}^{\bf t}(\eta, \,{Q\mathstrut}_{0}{W\mathstrut}_{\!0}, \,{\widetilde{Q}\mathstrut}_{0}{\,\widetilde{\!W}\mathstrut}_{\!0}, \, R)
\;\; \leq \;\;
{F\mathstrut}^{\bf t}(\eta, \hat{Q}\hat{W}, \hat{Q}\hat{W}, R)
\nonumber \\
&
\;\;\;\;\;\;\;\;\;\;\;\;\,\,\,
+ \big|{F\mathstrut}_{1}^{\bf t}( \hat{Q}\hat{W} , \, {\widetilde{Q}\mathstrut}_{0}{\,\widetilde{\!W}\mathstrut}_{\!0})  -   {F\mathstrut}_{1}^{\bf t}( \hat{Q}\hat{W} , \, {\widetilde{Q}\mathstrut}_{1}{\,\widetilde{\!W}\mathstrut}_{\!1})\big|^{+}.
\label{eqBoundOne}
\end{align}
\end{lemma}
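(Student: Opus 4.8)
The plan is to mirror the proof of Lemma~\ref{lem5PP}, treating the linear input-cost penalty $\eta\,\mathbb{E}_{Q}[f(X)]$ in (\ref{eqFMix}) as a term common to both branches of the maximum. Since there is no hard constraint $\mathbb{E}_{Q}[f(X)]\le\alpha$ here, the relevant convex domain is simply
\[
{\cal S} \;\triangleq\; \big\{QW : {F\mathstrut}_{1}^{\bf t}(QW, \,{\widetilde{Q}\mathstrut}_{0}{\,\widetilde{\!W}\mathstrut}_{\!0}) < +\infty\big\},
\]
which is non-empty because $\hat{Q}\hat{W}\in{\cal S}$, and which also contains ${Q\mathstrut}_{0}{W\mathstrut}_{\!0}$ by (\ref{eqMinProcedureMix2}). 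The two functions $QW\mapsto {F\mathstrut}_{1}^{\bf t}(QW, {\widetilde{Q}\mathstrut}_{0}{\,\widetilde{\!W}\mathstrut}_{\!0}) + \eta\,\mathbb{E}_{Q}[f(X)]$ and $QW\mapsto {F\mathstrut}_{2}(QW, R) + \eta\,\mathbb{E}_{Q}[f(X)]$ are convex ($\cup$) and continuous on ${\cal S}$, since $\mathbb{E}_{Q}[f(X)]$ is linear in $Q$ and finite on the simplex and the two underlying branches are already convex as in Lemma~\ref{lem5PP}. Because the penalty is common to both branches, the comparison of which branch attains the maximum at ${Q\mathstrut}_{0}{W\mathstrut}_{\!0}$ is unaffected by $\eta$, so I would split into the same three cases as in Lemma~\ref{lem5PP}.

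In the case ${F\mathstrut}_{1}^{\bf t}({Q\mathstrut}_{0}{W\mathstrut}_{\!0}, {\widetilde{Q}\mathstrut}_{0}{\,\widetilde{\!W}\mathstrut}_{\!0}) > {F\mathstrut}_{2}({Q\mathstrut}_{0}{W\mathstrut}_{\!0}, R)$, I would form the convex combination ${Q\mathstrut}^{(\lambda)}{W\mathstrut}^{(\lambda)}$ of (\ref{eqConvCombin}) and differentiate the augmented branch $\lambda\mapsto {F\mathstrut}_{1}^{\bf t}({Q\mathstrut}^{(\lambda)}{W\mathstrut}^{(\lambda)}, {\widetilde{Q}\mathstrut}_{0}{\,\widetilde{\!W}\mathstrut}_{\!0}) + \eta\,\mathbb{E}_{Q^{(\lambda)}}[f(X)]$ at $\lambda=0$. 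The first-order optimality condition (non-negative right derivative, forced by (\ref{eqMinProcedureMix2})) reproduces the Pythagorean-type inequality (\ref{eqPyth}), augmented by the offset $\eta\big[\mathbb{E}_{\hat{Q}}[f(X)] - \mathbb{E}_{Q_0}[f(X)]\big]$ coming from the derivative of the linear penalty, which is constant in $\lambda$. Carrying the $\eta$-terms to their respective sides reassembles $\eta\,\mathbb{E}_{Q_0}[f(X)]$ on the left and $\eta\,\mathbb{E}_{\hat{Q}}[f(X)]$ on the right.

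I would then invoke the identity (\ref{eqIdent}) on the right, use ${Q\mathstrut}_{0}{W\mathstrut}_{\!0} = {\widetilde{Q}\mathstrut}_{1}{\,\widetilde{\!W}\mathstrut}_{\!1}$ to turn the retained term ${D\mathstrut}^{\bf t}(\hat{Q}\hat{W}, {Q\mathstrut}_{0}{W\mathstrut}_{\!0})$ into ${D\mathstrut}^{\bf t}(\hat{Q}\hat{W}, {\widetilde{Q}\mathstrut}_{1}{\,\widetilde{\!W}\mathstrut}_{\!1})$, and drop only the non-negative $D(\hat{W}\,\|\,{W\mathstrut}_{\!0}\,|\,\hat{Q})$. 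This leaves the left-hand side equal to ${F\mathstrut}^{\bf t}(\eta, {Q\mathstrut}_{0}{W\mathstrut}_{\!0}, {\widetilde{Q}\mathstrut}_{0}{\,\widetilde{\!W}\mathstrut}_{\!0}, R)$ and the right-hand side equal to ${F\mathstrut}^{\bf t}(\eta, \hat{Q}\hat{W}, \hat{Q}\hat{W}, R)$ plus ${D\mathstrut}^{\bf t}(\hat{Q}\hat{W}, {\widetilde{Q}\mathstrut}_{0}{\,\widetilde{\!W}\mathstrut}_{\!0}) - {D\mathstrut}^{\bf t}(\hat{Q}\hat{W}, {\widetilde{Q}\mathstrut}_{1}{\,\widetilde{\!W}\mathstrut}_{\!1})$, where I use ${F\mathstrut}_{1}^{\bf t}(\hat{Q}\hat{W}, \hat{Q}\hat{W}) + \eta\,\mathbb{E}_{\hat{Q}}[f(X)] \le {F\mathstrut}^{\bf t}(\eta, \hat{Q}\hat{W}, \hat{Q}\hat{W}, R)$. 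By (\ref{eqIdent}) this last difference equals ${F\mathstrut}_{1}^{\bf t}(\hat{Q}\hat{W}, {\widetilde{Q}\mathstrut}_{0}{\,\widetilde{\!W}\mathstrut}_{\!0}) - {F\mathstrut}_{1}^{\bf t}(\hat{Q}\hat{W}, {\widetilde{Q}\mathstrut}_{1}{\,\widetilde{\!W}\mathstrut}_{\!1})$, which is at most its positive part, giving (\ref{eqBoundOne}). In the case ${F\mathstrut}_{1}^{\bf t} < {F\mathstrut}_{2}$ at ${Q\mathstrut}_{0}{W\mathstrut}_{\!0}$, local non-improvability and convexity make ${Q\mathstrut}_{0}{W\mathstrut}_{\!0}$ a global minimizer of ${F\mathstrut}_{2}(QW, R) + \eta\,\mathbb{E}_{Q}[f(X)]$ over ${\cal S}$, so ${F\mathstrut}^{\bf t}(\eta, {Q\mathstrut}_{0}{W\mathstrut}_{\!0}, {\widetilde{Q}\mathstrut}_{0}{\,\widetilde{\!W}\mathstrut}_{\!0}, R) = {F\mathstrut}_{2}({Q\mathstrut}_{0}{W\mathstrut}_{\!0}, R) + \eta\,\mathbb{E}_{Q_0}[f(X)] \le {F\mathstrut}_{2}(\hat{Q}\hat{W}, R) + \eta\,\mathbb{E}_{\hat{Q}}[f(X)] \le {F\mathstrut}^{\bf t}(\eta, \hat{Q}\hat{W}, \hat{Q}\hat{W}, R)$, which is (\ref{eqBoundOne}); and the equality case follows because at least one of the two convex branches is non-decreasing at $\lambda=0$, reducing to one of the cases above.

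The hard part is not the algebra but verifying that the linear penalty threads cleanly through the differentiation so that ${F\mathstrut}^{\bf t}(\eta, \cdot, \cdot, R)$ is reconstituted with the same weight $\eta$ on both sides of the bound, and --- exactly as in Lemma~\ref{lem5PP} --- handling the non-smooth maximum through the ``at least one branch non-decreasing'' argument in the equality case, which is what allows a single first-order condition to cover both branches.
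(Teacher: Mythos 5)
Your proposal is correct and follows exactly the route the paper intends: the paper's proof of Lemma~\ref{lem5PPMix2} is simply ``Similar to Lemma~\ref{lem5PP},'' and you have carried out precisely that adaptation, correctly identifying the only two changes needed (dropping the cost constraint from the convex domain ${\cal S}$ and threading the common linear penalty $\eta\,\mathbb{E}_{Q}[f(X)]$ through the three-case analysis and the Pythagorean-type inequality). The case split, the use of (\ref{eqIdent}) to convert the divergence difference into the $F_{1}^{\bf t}$ difference, and the ``at least one branch non-decreasing'' argument in the equality case all match the paper's argument for Lemma~\ref{lem5PP}.
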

\begin{proof}
Similar to Lemma~\ref{lem5PP}.
\end{proof}

\bigskip

{\em Proof of Theorem~\ref{thmConvergeMix2}:}
The RHS of (\ref{eqConvergeMix2}) can be rewritten in terms of ${F\mathstrut}^{\bf t}(\eta, QW, \widetilde{Q}\,\widetilde{\!W}, R)$ as:
\begin{equation} \label{eqRewriteMix2}
\min_{\substack{\\\widetilde{Q}, \, \,\widetilde{\!W}:\\ {D\mathstrut}^{\bf t}(\widetilde{Q}\,\widetilde{\!W}\!, \,{\widetilde{Q}\mathstrut}_{0}{\,\widetilde{\!W}\mathstrut}_{\!0})
\,<\, \infty}}
\!\!\!\!\!\!\!\!\!\!\!\!\!\!
{E\mathstrut}_{0}^{\bf t}(\eta, \widetilde{Q}\,\widetilde{\!W}, R)  = \!\!\!\!
\min_{\substack{\\Q, \, W:\\ {D\mathstrut}^{\bf t}(QW, \,{\widetilde{Q}\mathstrut}_{0}{\,\widetilde{\!W}\mathstrut}_{\!0})
\,<\, \infty}}
\!\!\!\!\!\!\!\!\!\!\!\!\!\!
{F\mathstrut}^{\bf t}(\eta, QW, QW, R).
\end{equation}
Suppose (\ref{eqRewriteMix2}) is finite, and let $\hat{Q}\hat{W}$ achieve the minimum on the RHS. Then we can use Lemma~\ref{lem5PPMix2}
with $\hat{Q}\hat{W}$. The rest of the proof is the same as for Theorem~\ref{thmConvergence}.
$\square$

%\newpage

\bibliographystyle{IEEEtran}
%%\bibliography{researchproposal}

\begin{thebibliography}{1}


%`` ''


\bibitem
{DueckKorner79}
G.~Dueck and J.~K{\"o}rner,
\newblock ``Reliability Function of a Discrete Memoryless Channel at Rates above Capacity,''
\newblock {\em IEEE Trans. on Information Theory}, vol. 25, no. 1, pp. 82--85, Jan 1979.



\bibitem
{Arimoto76}
S.~Arimoto,
\newblock ``Computation of Random Coding Exponent Functions,''
\newblock {\em IEEE Trans. on Information Theory}, vol. 22, no. 6, pp. 665--671, Nov 1976.





\bibitem
{Arimoto73}
S.~Arimoto,
\newblock ``On the Converse to the Coding Theorem for Discrete Memoryless Channels,''
\newblock {\em IEEE Trans. on Information Theory}, vol. 19, no. 3, pp. 357--359, May 1973.

\bibitem
{OohamaJitsumatsu19}
Y.~Oohama and Y.~Jitsumatsu,
``A New Iterative Algorithm for Computing the Correct Decoding Probability Exponent of Discrete Memoryless Channels,''
\newblock {\em IEEE Trans. on Information Theory (Early Access)}, Oct 2019.



\bibitem
{OohamaJitsumatsu15}
Y.~Oohama and Y.~Jitsumatsu,
\newblock ``A New Iterative Algorithm for Computing the Optimal Exponent of Correct Decoding for Discrete Memoryless Channels,''
\newblock in {\em IEEE International Symposium on Information Theory (ISIT)}, Hong Kong, China, Jun 2015.















\bibitem
{Gallager}
R.~G.~Gallager,
\newblock {\em ``Information Theory and Reliable Communication,''}
\newblock John Wiley \& Sons, 1968.


\bibitem
{TridenskiZamir18Trans}
S.~Tridenski and R.~Zamir,
\newblock ``Channel Input Adaptation via Natural Type Selection,''
\newblock {\em IEEE Trans. on Information Theory}, vol. 66, no. 4, pp. 2078--2090, Apr 2020.





\bibitem
{TridenskiZamir17}
S.~Tridenski and R.~Zamir,
\newblock ``Exponential Source/Channel Duality,''
\newblock in {\em IEEE International Symposium on Information Theory (ISIT)}, Aachen, Germany, Jun 2017.



\bibitem
{CsiszarTusnady84}
I.~Csisz\'ar and G.~Tusn\'ady,
\newblock ``Information Geometry and Alternating Minimization Procedures,''
\newblock {\em Statistics \& Decisions}, no. 1, pp. 205--237, 1984.




\bibitem
{TridenskiZamir18Extended}
S.~Tridenski and R.~Zamir,
\newblock ``Channel Input Adaptation via Natural Type Selection,''
\newblock arXiv, vol. abs/1811.01354, 2018.
%[Online]. Available: https://arxiv.org/abs/1811.01354.




\bibitem
{TridenskiSomekhBaruchZamir20}
S.~Tridenski, A.~Somekh-Baruch, R.~Zamir,
\newblock ``Proof of Convergence for Correct-Decoding Exponent Computation,''
\newblock accepted to {\em IEEE International Symposium on Information Theory (ISIT)}, Los Angeles, California, Jun 2020.



















\bibitem
{Oohama17}
Y.~Oohama,
\newblock ``Exponent Function for Stationary Memoryless Channels with Input Cost at Rates above the Capacity,''
\newblock arXiv, vol. abs/1701.06545, 2017.









\bibitem%[Cover and Thomas, 1991]{CoverBook}
{CoverThomas}
T.~M.~Cover and J.~A.~Thomas,
\newblock {\em ``Elements of Information Theory,''}
\newblock John Wiley \& Sons, 1991.



\bibitem
{TridenskiZamir18}
S.~Tridenski and R.~Zamir,
\newblock ``Channel Input Adaptation via Natural Type Selection,''
\newblock in {\em IEEE International Symposium on Information Theory (ISIT)}, Vail, Colorado, Jun 2018.










%\bibitem%[Cover and Thomas, 1991]{CoverBook}
%{CoverThomas}
%T.~M.~Cover and J.~A.~Thomas.
%\newblock {\em Elements of Information Theory}.
%\newblock John Wiley \& Sons, 1991.







%\bibitem
%{Gallager73}
%R.~G.~Gallager.
%\newblock The Random Coding Bound is Tight for the Average Code.
%\newblock {\em IEEE Trans. on Information Theory}, vol. 19, no. 2, pp. 244--246, Mar 1973.























\end{thebibliography}

\end{document}